\documentclass{article}

\usepackage{graphicx}%
\usepackage{multirow}%
\usepackage{amsmath,amssymb,amsfonts, amsthm,amssymb}%
\usepackage{amsthm}%
\usepackage{mathrsfs}%
\usepackage[title]{appendix}%
\usepackage{xcolor}%
\usepackage{textcomp}%
\usepackage{manyfoot}%
\usepackage{booktabs}%
\usepackage{algorithm}%
\usepackage{algorithmicx}%
\usepackage{algpseudocode}%
\usepackage{listings}%
\usepackage{comment}
\usepackage{amsmath}
\usepackage{mathtools}

\usepackage{url}
\usepackage{mathtools}  
\newtheorem{prop}{Proposition}
\usepackage[english]{babel}
\usepackage[letterpaper,top=2cm,bottom=2cm,left=3cm,right=3cm,marginparwidth=1.75cm]{geometry}
\usepackage[colorlinks=true, allcolors=blue]{hyperref}

\title{Killing tensors of Weyl's Class}
\author{Dionysios Kokkinos \footnote{Department of Mechanical Engineering, Hellenic Mediterranean University, Heraklion, Crete, Greece \\  kokkinos@physics.uoc.gr}}

\begin{document}
\maketitle







\abstract{This work represents an initial step towards a systematic framework for identifying hidden symmetries in algebraically general (Petrov type I) spacetimes. Motivated by this, we focus on Weyl's class, which provides a simple yet a class of algebraically general solutions of static, axisymmetric geometries. Employing the canonical forms of rank-2 Killing tensors, we systematically derive the constraints that must be satisfied for these spacetimes to admit genuine hidden symmetries. Our analysis yields an analytical characterization of the obstructions to integrability within subclasses of Weyl's class, both in vacuum and electrovacuum. In particular, we find that no two-Killing-vector member of Weyl's class, in vacuum or electrovacuum, admits an irreducible Killing tensor capable of supplying the additional integral of motion required for complete integrability. Members of Weyl's class admitting a third additional Killing-vector isometry (i.e. Levi-Civita) are already completely integrable through their manifest symmetries alone admitting only reducible Killing tensors. This result provides an analytic complement to previous analytical and numerical investigations, and establishes a systematic connection between the absence of hidden symmetries and the breakdown of complete integrability. Beyond the specific classification obtained here, these results demonstrate the potential of our approach regarding the canonical forms of Killing tensor as a systematic tool for probing hidden symmetries in Petrov type I spacetimes.
}

\section{Introduction}

Einstein's theory of gravity still stands as the gravitational \textit{paradigm} of our time, serving simultaneously as the theoretical anchor for later gravitational theories. Although this theory is often considered thoroughly explored, one cannot disregard the possibility that certain classes of solutions remain to be investigated, leaving open the prospect that rich phenomenology could emerge from their analysis. In particular, Petrov type I solutions remain largely unexplored, since a systematic method for their extraction is still absent.

Two of the most known and general classes of solutions, which are of type I, can be obtained by assuming axial symmetry and stationarity or staticity resulting in Weyl-Lewis-Papapetrou's class (WLP) or Weyl's class correspondingly. These two quite known classes of spacetimes initially were extracted in vacuum limit, but since then, they have been studied in electrovacuum with (or without) cosmological constant and rarely with a presence of a stress-energy-momentum tensor representing dust, perfect fluid or anisotropic fluids \cite{stephani2009exact, griffiths2009exact, schmidt2008einstein}.  

The main topic of this work is Weyl's class and describes the exterior gravitational field of a static, axially symmetric source in vacuum \cite{weyl1919neue, weyl2012republication}. Algebraically, its Weyl tensor is of type I, distinguishing it from the more restrictive type D solutions (such as Schwarzschild spacetime) that arise as special case. Within this class the most known subclasses are Levi-Civita spacetime \cite{levi1919meccanica}, Zipoy-Voorhees known as $\gamma$-metric \cite{zipoy1966topology,voorhees1970static}  and Chazy-Curzon \cite{chazy1924champ, curzon1925cylindrical} spacetime. 

Despite its age, Weyl's class remains a fertile setting for probing the relationship between the geometric structure of spacetime and the existence of hidden symmetries encoded by higher-rank Killing tensors. In particular, the Zipoy-Voorhees and Chazy-Curzon spacetimes exhibit non-integrable geodesic dynamics, as their geodesic motion lacks the independent constants of motion to ensure complete integrability, thus, the equations of motion cannot be reduced to quadratures and the resulting trajectories may display chaotic behavior \cite{brink2008spacetime, lukes2012nonintegrability, vollmer2017killing, vollmer2015reducibility}. A longstanding debate about the existence of an additional, Carter-like constant exists within the literature. Several works have proposed the presence of such a conserved quantity, while subsequent analytical and numerical investigations have challenged these claims \cite{dubeibe2019geodesic,georgiev2019non}, (see also \cite{maciejewski2013nonexistence} and references therein). Sometimes the lack of a systematic analytical approach gives rise to debates like these and most of the times numerical studies are unable to answer thoroughly. 

The question of whether the geodesic equations of a given spacetime are completely integrable is, in practice, a question about the existence of hidden symmetry enabled by a rank-2 (or higher rank) Killing tensor. In line with this, the central question can be formulated as follows: \textit{if a spacetime with two hypersurface-orthogonal Killing vectors within Weyl's class, in vacuum or electrovacuum, admits completely integrable  geodesic motion, must there exist an irreducible rank-2 (or higher-rank) Killing tensor, part of its  canonical forms \cite{kokkinos2024}, that generates the additional integral of motion required for complete integrability}? \cite{benenti1979remarks, cariglia2014hidden}. Beyond the explicit symmetries generated by Killing vectors, hidden symmetries are encoded in the phase-space structure through irreducible Killing tensors of rank-2 or higher giving rise to additional polynomial integrals of the geodesic flow. The hidden symmetry of Carter's family known as Carter's constant \cite{Carter1968b}, generated by an irreducible rank-2 Killing tensor tied to the D type of the Weyl tensor \cite{walker1970quadratic}, rendering the geodesic Hamilton-Jacobi equation separable. For the static, axisymmetric vacuum solutions of Weyl's class this is not the case though. Beyond the two Killing vectors $\partial_t$ and $\partial_\phi$ guaranteed by the initial ansatz, nothing in the construction of Weyl's class seems to generate a fourth integral of motion. Consequently, for spacetimes admitting two independent Killing vectors, determining the existence or absence of irreducible Killing tensors provides a geometric criterion for assessing whether the remaining integrals required for complete integrability can arise from hidden symmetries. The present work addresses a specific question within this setting: \emph{can irreducible Killing tensors, and hence a genuine hidden symmetry associated with integrability, exists within the broader class of Weyl spacetimes in electrovacuum and vacuum limit?}

Recently, a sophisticated method to determine whether a spacetime admits irreducible Killing tensors was developed in \cite{gray2025lower}. The authors employ a metric ansatz that decomposes the $d$-dimensional manifold into two special transverse coordinates $t,r$ and a $(d-2)$-dimensional base space, with a shift vector $\nu^A$ mixing $t$ with the base coordinates. The key ingredient for irreducibility is the non-commutativity of the base isometry algebra: the shift vector demotes non-commuting base symmetries from manifest to hidden, while their Casimir combination lifts to an irreducible rank-2 Killing tensor of $\mathcal{M}$. A vanishing shift vector, by contrast, leaves every base isometry manifest, and hence reducible. It therefore remains to be determined whether Weyl's class and by extension any static, hypersurface-orthogonal axisymmetric metric with no $g_{t\varphi}$ cross term can be cast in this ansatz at all. If it can, the resulting vanishing shift vector would trivially imply only reducible Killing tensors, in line with our results. If it cannot, this would oblige us to pursue other methods to answer the question already posed.

Our assessment is that a sufficient answer may be given by reducing the Killing equations of the canonical forms of Killing tensor to explicit differential conditions on the metric potentials, and confronting these conditions with the generic multipole expansion of the vacuum potentials. This is valid because pursuing to extract algebraically general solutions, subclasses of Weyl's class, only the assumption of existence of a Killing tensor works in contrast to the Killing–Yano tensor since its existence \textbf{constrains} the algebraic character of our solution forbidding algebraically general solutions \cite{papakostas1985space, collinson1971special}. On the other hand, this approach does not address the possibility of hidden symmetries generated by Killing tensors of rank higher than two, nor does it establish their reducibility in the general case. Such higher rank Killing tensors may provide additional integrals of motion even when no rank-2 Killing tensor is present, as discussed, for example, in \cite{vollmer2015reducibility}. Consequently, a complete assessment of integrability would require extending the analysis to higher-rank Killing tensors.

This analytical approach provides a systematic basis, and Weyl's class serves as a \textit{crash test} for identifying irreducible rank-2 Killing tensors in type I spacetimes, investigating their possible existence within subclasses of Weyl's class. It is worth noting that the existence of such an irreducible Killing tensor is a sufficient condition for complete integrability of the geodesic flow, since the quadratic quantity $Q=K^{\mu\nu}p_\mu p_\nu$ is automatically conserved and served as a fourth constant of motion along with the Hamiltonian and with the linear invariants $E,L$ generated by the two Killing vectors. In this scheme, Hamilton-Jacobi separation is not need to be exhibited for this conclusion to hold. Separability is a stronger and more constructive condition, since it additionally places the conserved quantities via quadrature. In this fashion, it sets up a framework to investigate integrability apart from the geodesic equations themselves, offering an answer to the question posed, from a different point of view. At the same time, with this line of work we give an answer about the integrability properties of geodesic motion in the whole class, incorporating the Zipoy-Voorhees ($\gamma$-metric) and Curzon-Chazy spacetimes.


This paper starts by introducing the necessary notation of Newman-Penrose formalism in \autoref{Chapter 2}. In \autoref{Chapter 3} we describe the Weyl's class in electrovacuum by considering a null tetrad frame presenting the corresponding spin coefficients and the field equations in this limit. The main analysis is made in vacuum \autoref{Chapter 4} while in \autoref{Chapter 5} we discuss unique subclasses of Weyl's class such as Zipoy-Voorhees spacetime ($\gamma$-metric) and Chazy-Curzon spacetime and how the integrability of Schwarzchild spacetime is adapted to our setup. Afterwards, the corresponding analysis in electrovacuum takes place in \autoref{Chapter 6} respectively. At last, a summary is given at the last chapter of this paper in \autoref{Chapter 8}.

\section{Newman-Penrose Formalism}\label{Chapter 2}

The Newman-Penrose Formalism is a widely known formalism that was presented by Newman and Penrose \cite{newman1962approach} and was analyzed geometrically by Cahen, Debever and Defrise \cite{cahen1967complex}, \cite{debeverriemann}.

The main concept of the formalism could be briefly described as follows: \textbf{The need to interpret the gravitational radiation more conveniently forces us to associate the Riemann tensor with isotropic null tetrads}. The latter could happen in a 3-dimensional complex bivector space ($C_3$) spanned by self-dual 2-forms (bivectors). The metric can be put in the form

\begin{equation}ds^2 = 2(\boldsymbol{\theta}^1 \boldsymbol{\theta}^2 - \boldsymbol{\theta}^3 \boldsymbol{\theta}^4) ,\end{equation}
where the general metric $g_{\mu \nu}$ is the following and equal to its inverse $g^{\mu \nu}$

\begin {equation}g_{\mu \nu} = l_\mu n_\nu + n_\mu l_\nu   - m_\mu \bar{m}_\nu - \bar{m}_\mu m_\nu  = \begin{pmatrix}
0 &1&0&0\\
1&0&0&0\\
0&0&0&-1\\
0&0&-1&0
\end{pmatrix}.\end{equation}
The pseudo-orthonormal basis contains two real and two complex conjugate vectors

\begin{equation} \boldsymbol{\theta}^1 \equiv n_\mu dx^\mu, \hspace{0.8cm} \boldsymbol{\theta}^2 \equiv l_\mu dx^\mu, \hspace{0.8cm} \boldsymbol{\theta}^3 \equiv - \bar{m}_\mu dx^\mu, \hspace{0.8cm} \boldsymbol{\theta}^4 \equiv - m_\mu dx^\mu,  \end{equation}
the non-zero orthogonality properties of the vector components are

\begin{equation} l_\mu n^\mu = 1 = - m_\mu \bar{m}^\mu .\end{equation}
The directional derivatives (dual basis) of the formalism are given by

$$\boldsymbol{D}  =  l^{\mu} \partial_\mu  ,\hspace{0.8cm}\boldsymbol{\Delta} = n^{\mu} \partial_\mu, \hspace{0.8cm}\boldsymbol{\delta}  = m^{\mu} \partial_\mu, \hspace{0.8cm} \boldsymbol{\bar{ \delta}} = \bar{m}^{\mu} \partial_\mu .$$
The Einstein's Field Equations in this formalism are represented by the corresponding field equations, the Newman-Penrose Field Equations (NPE) (or Ricci identities) which are presented at the electrovacuum limit with the presence of cosmological constant \cite{newman1962approach}.

\small
\begin{equation}\tag{a} D \rho  - \bar{\delta} \kappa = {\rho}^2 + \sigma\bar\sigma+ \rho ( \epsilon + \bar{\epsilon}) - \bar{\kappa} \tau - \kappa \left[2(\alpha +\bar{\beta}) + (\alpha - \bar{\beta}) - \pi\right] + \Phi_{00},\end{equation}
\begin{equation}\tag{b} \delta \kappa -D\sigma = -\sigma(\rho+\bar\rho+3\epsilon-\bar\epsilon)+ \kappa \left[ \tau - \bar{\pi} +2(\bar{\alpha} +\beta) - (\bar{\alpha} - \beta) \right] - \Psi _0 , \end{equation}
\begin{equation}\tag{c} D\tau = \Delta \kappa + \sigma(\pi+\bar\tau) + \rho(\bar{\pi}+\tau)+ \tau(\epsilon - \bar{\epsilon})    -2\kappa \gamma - \kappa (\gamma + \bar{\gamma}) + \Psi_1 +\Phi_{01} ,\end{equation}
\begin{equation}\tag{i}\label{i} D\nu - \Delta \pi = \lambda(\bar\pi+\tau)+\mu(\pi + \bar{\tau}) +\pi(\gamma - \bar{\gamma}) -2\nu \epsilon - \nu (\epsilon + \bar{\epsilon}) +\Psi_3+\Phi_{21}, \end{equation}
\begin{equation}\tag{g} \bar{\delta} \pi -D\lambda= - \pi(\pi + \alpha - \bar{\beta}) -\bar\sigma\mu -\lambda\rho + \nu\bar{\kappa} +\lambda(3\epsilon-\bar\epsilon) -\Phi_{20},\end{equation}
\begin{equation}\tag{p}\label{p} \delta \tau -\Delta\sigma= \tau (\tau - \bar{\alpha} + \beta)+\sigma\mu+\bar\lambda\rho -\bar{\nu} \kappa  -\sigma(3\gamma-\bar\gamma)+\Phi_{02},\end{equation}
\begin{equation}\tag{h} D\mu - \delta\pi = \mu \bar{\rho} +\sigma\lambda+ \pi(\bar{\pi} - \bar{\alpha} +\beta) -\mu (\epsilon +\bar{\epsilon}) - \kappa \nu + \Psi_2 + 2\Lambda,\end{equation}
\begin{equation}\tag{n} \delta\nu - \Delta \mu = \mu (\mu + \gamma + \bar{\gamma}) +\lambda\bar\lambda - \bar{\nu} \pi + \nu (\tau - 2(\bar{\alpha}+\beta) +(\bar{\alpha} -\beta) ) +\Phi_{22},\end{equation}
\begin{equation}\tag{q} \Delta \rho - \bar{\delta} \tau = - \bar{\mu} \rho-\sigma\lambda - \tau(\bar{\tau} + \alpha - \bar{\beta}) + \nu \kappa + \rho(\gamma + \bar{\gamma}) - \Psi_2 - 2\Lambda,\end{equation}
\begin{equation}\tag{k} \delta\rho-\bar\delta\sigma = \rho(\bar{\alpha}+ \beta) -\sigma(3\alpha-\bar\beta) +\tau(\rho-\bar{\rho})+ \kappa(\mu-\bar{\mu}) - \Psi_1+\Phi_{01} ,\end{equation}
\begin{equation}\tag{m}\label{m} \bar{\delta} \mu -\delta\lambda= -\mu (\alpha +\bar{\beta}) -\pi (\mu- \bar{\mu}) - \nu (\rho-\bar{\rho}) -\lambda(3\bar\alpha-\beta)+ \Psi_3-\Phi_{21}, \end{equation}
\begin{equation}\tag{d} D\alpha - \bar{\delta} \epsilon = \alpha(\rho + \bar{\epsilon} -2\epsilon) +\beta\bar\sigma - \bar{\beta}\epsilon-\kappa\lambda - \bar{\kappa}\gamma + \pi (\epsilon + \rho) +\Phi_{10},\end{equation} 
\begin{equation}\tag{e}\label{e} D \beta - \delta{\epsilon} = \sigma(\alpha+\pi)+\beta(\bar{\rho} - \bar{\epsilon}) -\kappa(\mu + \gamma) -\epsilon(\bar{\alpha} - \bar{\pi}) + \Psi_1,\end{equation} 
\begin{equation}\tag{r} \Delta \alpha - \bar{\delta}\gamma = \nu(\epsilon+\rho)-\lambda(\tau+\beta) +\alpha( \bar{\gamma} - \bar{\mu}) +\gamma (\bar{\beta}- \bar{\tau}) - \Psi_3, \end{equation}
\begin{equation}\tag{o}\label{o} -\Delta \beta + \delta \gamma = \gamma(\tau -\bar{\alpha} - \beta) +\mu \tau-\sigma\nu - \epsilon \bar{\nu} - \beta( \gamma - \bar{\gamma} -\mu) +\alpha\bar\lambda+\Phi_{12},\end{equation}
\begin{equation}\tag{l}\label{l} \delta \alpha - \bar{\delta}\beta = \mu \rho -\sigma\lambda +\alpha (\bar{\alpha} - \beta) - \beta(\alpha - \bar{\beta})+  \gamma(\rho - \bar{\rho}) +\epsilon (\mu-\bar{\mu})-\Psi_2  + \Lambda +\Phi_{11},\end{equation}
\begin{equation}\tag{f}\label{f} D\gamma - \Delta \epsilon = \alpha(\tau + \bar{\pi}) + \beta( \bar{\tau} + \pi) - \gamma ( \epsilon+\bar{\epsilon}) - \epsilon (\gamma +\bar{\gamma})   + \Psi_2 - \Lambda - \kappa \nu +\tau \pi +\Phi_{11},\end{equation}
\begin{equation}\tag{j} \bar{\delta}\nu -\Delta\lambda=\lambda(\mu+\bar\mu+3\gamma-\bar\gamma) - \nu\left[2(\alpha +\bar{\beta} ) + (\alpha - \bar{\beta})  + \pi - \bar{\tau}  \right] + \Psi_4.\end{equation}
\normalsize
In this formalism, the 10 Weyl's components are represented by the 5 complex scalar functions as follows
\small
$$ \Psi_0 = C_{\kappa \lambda \mu \nu} l^\kappa m^\lambda l^\mu m^\nu ,$$
$$ \Psi_1 = C_{\kappa \lambda \mu \nu} l^\kappa n^\lambda l^\mu m^\nu  ,$$
\begin{equation} \Psi_2 = \frac{1}{2}C_{\kappa \lambda \mu \nu} l^\kappa n^\lambda \left[ l^\mu n^\nu -    m^\mu \bar{m}^\nu  \right] , \end{equation}
$$ \Psi_3 = C_{\kappa \lambda \mu \nu} n^\kappa l^\lambda n^\mu \bar{m}^\nu, $$
$$ \Psi_4 = C_{\kappa \lambda \mu \nu} n^\kappa \bar{m}^\lambda n^\mu \bar{m}^\nu .$$
\normalsize
The Maxwell equations follow
\begin{equation}\label{eq:Maxwell1}
    D\Phi_1 - \bar\delta\Phi_0 = (\pi-2\alpha)\Phi_0 +2\rho\Phi_1 -\kappa \Phi_2
,\end{equation}
\begin{equation}\label{eq:Maxwell2}
    D\Phi_2-\bar\delta\Phi_1 = -\lambda\Phi_0 +2\pi\Phi_1+(\rho-2\epsilon)\Phi_2
,\end{equation}
\begin{equation}\label{eq:Maxwell3}
    -\Delta\Phi_0+\delta\Phi_1 = (\mu-2\gamma)\Phi_0+2\tau\Phi_1-\sigma\Phi_2
 ,\end{equation}
\begin{equation}\label{eq:Maxwell4}
    -\Delta\Phi_1+\delta\Phi_2 = -\nu\Phi_0+2\mu\Phi_1+(\tau-2\beta)\Phi_2
.\end{equation}
\normalsize
where the electromagnetic scalars are defined by the relations below
\begin{equation}
    \Phi_{ab} = \kappa_0 \Phi_a \Phi_b~;~~~~~~a,b= 0,1,2, 
\end{equation}
\begin{equation}
\Phi_0 \equiv F_{\mu\nu}\,l^\mu m^\nu,
\end{equation}
\begin{equation}
\Phi_1 \equiv \frac{1}{2}F_{\mu\nu}\big(l^\mu n^\nu + \bar m^\mu m^\nu\big),
\end{equation}
\begin{equation}
\Phi_2 \equiv F_{\mu\nu}\,\bar m^\mu n^\nu.
\end{equation}
In this point we ought to note that in a static axisymmetric spacetime should exist an electromagnetic potential $A(\rho,z)=(V(\rho,z),0,0,0)$ which is able to give rise to the Maxwell tensor $F_{\mu\nu}\equiv \partial_\mu A_{\nu} - \partial_\nu A_{\mu}$. As a matter of fact it is easy for someone to show that with this potential the only non-zero electromagnetic scalars are $\Phi_0, \Phi_2$ which satisfy the following relation
\begin{equation}
    \Phi_0=-\bar\Phi_2.
\end{equation}

\section{Weyl's class}\label{Chapter 3}

This class of spacetimes has been found a century ago by Weyl and describes the gravitational field of a static and axially symmetric gravitational source in vacuum \cite{weyl1919neue, weyl2012republication}. Algebraically its Weyl tensor is general characterizing this class as type I according to Petrov classification. 

\begin{equation}
    ds^2 = e^{2\lambda} dt^2 - e^{-2\lambda} \left[  \rho^2 d\phi^2 +e^{2\mu} (d\rho^2 + dz^2) \right].
\end{equation}
The metric is presented in the Weyl's canonical coordinates and both functions $\lambda$ and $\mu$ depend on the non-ignorable coordinates $\rho \in [0,+\infty) ,z \in (-\infty, +\infty)$ while the ignorable one are $t\in(-\infty, +\infty)$ and $\phi\in[0, 2\pi)$. 
Function $\lambda(\rho,z)$ represents the gravitational potential generated by the source and its general form can be obtained by solving the Laplacian. At the same time, the following field equations characterize fully our spacetime in vacuum

\begin{equation}\label{Laplacian}
    \nabla^2 \lambda =0 ~~\Rightarrow \qquad \lambda_{\rho \rho} +\frac{\lambda_\rho}{\rho} + \lambda_{zz} = 0,
\end{equation}
\begin{equation}\label{mu_rho}
\mu_\rho = \rho(\lambda^2_\rho - \lambda^2_z) ,
\end{equation}
\begin{equation}\label{mu_z}
    \mu_z = 2\rho\lambda_\rho\lambda_z.
\end{equation}

The general solution of this system of equations in a closed form have been found by Waylen \cite{waylen1982general} while its construction to dimensions $D\ge4$ made by Emparan and Reall \cite{emparan2002generalized}.

In pursuit of type I spacetimes, subclasses of Weyl's class, which will consist possible candidates for accommodating hidden symmetries we scope to investigate their existence not only in vacuum but also in electrovacuum limit. For this reason after the definition of null tetrad frame and the presentation of the spin coefficient we will give the NP equations with the presence of Maxwell tensor. Thus, the field equations will be presented implied by a static potential which yields the Maxwell equations' constraint, namely, $\Phi_0+\bar\Phi_2=\Phi_1=0$.

\begin{equation}
    n_\mu = (\frac{e^\lambda}{\sqrt2}, \frac{\rho e^{-\lambda}}{\sqrt2}, 0 ,0),
\end{equation}
\begin{equation}
    l_\mu = (\frac{e^\lambda}{\sqrt2}, -\frac{ \rho e^{-\lambda}}{\sqrt2}, 0 ,0),
\end{equation}
\begin{equation}
    m_\mu = (0,0,\frac{e^{\mu-\lambda}}{\sqrt2}, i\frac{e^{\mu - \lambda}}{\sqrt2}),
\end{equation}
\begin{equation}
    m_\mu = (0,0,\frac{e^{\mu-\lambda}}{\sqrt2}, -i\frac{e^{\mu - \lambda}}{\sqrt2}).
\end{equation}
The directional derivatives take the expression as follows

\begin{equation}
    D = \frac{l_\phi\partial_t - l_t\partial_\phi}{n_tl_\phi-l_tn_\phi}= \frac{\rho e^{-\lambda} \partial_t -e^\lambda \partial_\phi}{\sqrt2\rho},
\end{equation}
\begin{equation}
    \Delta = - \frac{n_\phi\partial_t - n_t\partial_\phi}{n_tl_\phi-l_tn_\phi} = \frac{\rho e^{-\lambda} \partial_t +e^\lambda \partial_\phi}{\sqrt2\rho},
\end{equation}
\begin{equation}
    \delta = \frac{m_z\partial_\rho - m_\rho\partial_z}{m_\rho \bar{m}_z-m_z \bar{m}_\rho} = - \frac{e^{\lambda-\mu}}{\sqrt2}(\partial_\rho+i\partial_z),
\end{equation}
\begin{equation}
    \bar\delta = - \frac{\bar{m}_z\partial_\rho - \bar{m}_\rho\partial_z}{m_\rho \bar{m}_z-m_z \bar{m}_\rho}=- \frac{e^{\lambda-\mu}}{\sqrt2}(\partial_\rho-i\partial_z).
\end{equation}
The Cartan's structure equations are
\begin{equation}d\theta^1 = ( \bar\alpha+\beta-\bar{\pi})\theta^1\wedge\theta^3 + (\alpha+\bar\beta-\pi) \theta^1\wedge\theta^4 -\bar{\nu}\theta^2\wedge\theta^3 -\nu\theta^2\wedge\theta^4,\end{equation}
\begin{equation}d\theta^2 = \kappa\theta^1\wedge\theta^3 +\bar{\kappa} \theta^1\wedge\theta^4 -(\bar\alpha+\beta-\tau)\theta^2\wedge\theta^3 -(\alpha+\bar\beta-\bar{\tau})\theta^2\wedge\theta^4,\end{equation}
\begin{equation}d\theta^3 =(\alpha-\bar{\beta})\theta^3\wedge\theta^4,\end{equation}
\begin{equation}d\theta^4 =  -(\bar{\alpha}-\beta)\theta^3\wedge\theta^4,\end{equation}
where the spin coefficients take the following forms in respect to tetrads' components

\begin{equation}
    a+\bar\beta-\pi = \frac{-l_\phi\bar\delta n_t +l_t \bar\delta n_\phi}{n_tl_\phi-l_tn_\phi},
\end{equation}
\begin{equation}
   \nu = \frac{-n_\phi\bar\delta n_t +n_t \bar\delta n_\phi}{n_tl_\phi-l_tn_\phi},
\end{equation}
\begin{equation}
    \bar\kappa = \frac{-l_\phi\bar\delta l_t +l_t \bar\delta l_\phi}{n_tl_\phi-l_tn_\phi},
\end{equation}
\begin{equation}
    \alpha+\bar\beta-\bar\tau = \frac{-n_\phi\bar\delta l_t +n_t \bar\delta l_\phi}{n_tl_\phi-l_tn_\phi},
\end{equation}
\begin{equation}
    \alpha-\bar\beta = \frac{\bar m_z \delta \bar m_\rho - \bar m_\rho \delta \bar m_z +m_z \bar\delta \bar m_\rho - m_\rho \bar\delta \bar m_z}{m_\rho \bar m_z - \bar m_\rho m_z}.
\end{equation}

Consequently the spin coefficients relations can be expressed as follows 

\begin{equation}\label{spin}
    \sigma=\lambda = \gamma=\epsilon = \mu = \rho=\pi-\bar\pi=0,
\end{equation}
\begin{equation}
    \bar{\alpha}+\beta = \bar{\pi} +\tau = \bar\kappa +\nu = 0,
\end{equation}

\begin{equation}\label{pi}
    \pi = \frac{1}{2}\frac{\bar\delta \rho}{\rho} = -\frac{e^{\lambda-\mu}}{2\sqrt{2}\rho},
\end{equation}
\begin{equation}\label{kappa}
    \bar\kappa = -\frac{1}{2}\frac{\bar\delta(\frac{e^{2\lambda}}{\rho})}{\frac{e^{2\lambda}}{\rho}} = \frac{e^{\lambda-\mu}}{\sqrt{2}}\left( \lambda_\rho - \frac{1}{2\rho} -i\lambda_z \right),
\end{equation}
\begin{equation}\label{coefficients}
a = \frac{\bar\delta (\lambda-\mu)}{2} =  \frac{e^{\lambda-\mu}}{2\sqrt{2}}\left[ (\mu-\lambda)_\rho -i(\mu-\lambda)_z \right].    
\end{equation}

At this point, one may observe that Weyl's class admits the so-called symmetric \footnote{In \cite{kokkinos2025}, we incorrectly stated that the extraction of a type I solution could be achieved by employing an antisymmetric null-tetrad transformation, leading to relations such as $\kappa+\bar{\nu}=\pi+\bar{\tau}=\dots=0$. In fact, we employed the symmetric transformation, since these combinations of spin coefficients can be obtained only through the transformation $n\longleftrightarrow -l$, $m\rightarrow -m$.} null-tetrad transformations, also known as prime transformations, which are also employed in the GHP formalism  \cite{geroch1973space,odonnell2003introduction}. Building on our previous work \cite{kokkinos2025}, in which these transformations were employed under the assumption of a sub-form of the Killing tensor with $q=+1$, namely $\lambda_3=0$, one would expect the corresponding Killing equations to admit an explicit solution. 

Moving forward, the NP equations due to relations (\ref{spin})-(\ref{coefficients}) read

\begin{equation}\tag{a)(n}
    \bar\delta \kappa = 2\alpha\kappa - \pi (\kappa+\bar\kappa)-\Phi_{00};  \ \ \  ~\Phi_{00}=\bar\Phi_{22},
\end{equation}
\begin{equation}\tag{b)(j}
    \delta\kappa = -2\kappa(\bar\alpha+\pi) - \Psi_0;  \ \ \  ~\Psi_{0}=\bar\Psi_{4},
\end{equation}
\begin{equation}\tag{g}
    \bar\delta\pi = -\pi(\pi+2\alpha) -\bar\kappa^2-\Phi_{20},
\end{equation}
\begin{equation}\tag{p}
    \delta\pi = -\pi(\pi+2\bar\alpha) -\kappa^2 - \Phi_{02},
\end{equation}
\begin{equation}\tag{l}
    \delta\alpha+\bar\delta\bar\alpha = 4\alpha\bar\alpha - \Psi_2+\Lambda,
\end{equation}
\begin{equation}\tag{f}
    \Psi_2 - \Lambda = \pi^2 - \kappa\bar\kappa,
\end{equation}
\begin{equation}\tag{h)(q}
\delta\pi = -\pi(\pi-2\bar\alpha) -\kappa\bar\kappa -\Psi_2 - 2\Lambda.    
\end{equation}
In terms of the metric functions the NP equations take the following forms
\small
\begin{equation}\tag{a)(n}\label{an}
\lambda_{\rho\rho}+\frac{\lambda_\rho}{\rho}+\lambda_{zz} = e^{-2\lambda}\big(V_\rho^2+V_z^2\big)
\end{equation}

{\footnotesize
\begin{multline}\tag{b)(j}
    \Psi_0 =   \frac{e^{\lambda-\mu}}{2} \biggl[ 2 \left(\lambda_\rho +\frac{1}{2\rho} - \mu_\rho \right) \left( \lambda_\rho -\frac{1}{2\rho} \right) + \lambda_{\rho\rho} +\frac{1}{2\rho^2} -\lambda_{zz} \\
    + i2 \left( \lambda_z\left( \lambda_\rho +\frac{1}{2\rho} - \mu_\rho \right) + \lambda_{z\rho}\right) \biggr] ,
\end{multline}
}
\begin{equation}\tag{g)+(p}
\mu_\rho = \rho(\lambda^2_\rho - \lambda^2_z) -\rho e^{-2\lambda}\big(V_\rho^2-V_z^2\big), 
\end{equation}
\begin{equation}\tag{g)-(p}
        \mu_z = 2\rho \Big[\lambda_\rho\lambda_z -e^{-2\lambda} V_\rho V_z \Big],
\end{equation}
\begin{equation}\tag{l)+(f}
    \Psi_2 - \Lambda = -\frac{e^{2(\lambda-\mu)}}{2} \left[ \lambda^2_\rho - \frac{\lambda_\rho}{\rho} +\lambda^2_z \right],
\end{equation}
\begin{equation}\tag{h)(q}
\Psi_2 +2\Lambda = -\frac{e^{2(\lambda-\mu)}}{2} \left[ \lambda^2_\rho - \frac{\lambda_\rho}{\rho} +\lambda^2_z \right].    
\end{equation}
\normalsize
Even if we allowed the existence of the cosmological constant within our NP equations, the last two equations clarify that the Weyl's class reduce our limit to to the vacuum as expected. Thus,

\begin{equation}\label{lambda}
    \Psi_2 = -\frac{e^{2(\lambda-\mu)}}{2} \left[ \lambda^2_\rho - \frac{\lambda_\rho}{\rho} +\lambda^2_z \right], \hspace{0.5cm} \& \hspace{0.5cm} \Lambda=0.
\end{equation}
In parallel, the contribution by the Maxwell equations \eqref{eq:Maxwell1}-\eqref{eq:Maxwell4} results in the following relations
\begin{equation}\label{MaxwellEV}
\bar\delta\Phi_0 = (2\alpha-\pi)\Phi_0+\kappa\Phi_2
\quad\Longrightarrow\quad
V_{\rho\rho}+\frac{V_\rho}{\rho}+V_{zz} = 2\big(\lambda_\rho V_\rho+\lambda_z V_z\big)
,\end{equation}
\normalsize
where the scalars take the following forms.

\begin{equation}
\Phi_0=-\frac{e^{-\mu}}{2}\big(F_{t\rho}+iF_{tz}\big),\qquad
\Phi_2=\frac{e^{-\mu}}{2}\big(F_{t\rho}-iF_{tz}\big),
\end{equation}

\section{Killing tensors in Vacuum}\label{Chapter 4}

In a previous study \cite{kokkinos2024}, we made the first algebraic classification of rank-2 Killing tensors in General Relativity, resulting in four distinct families of canonical forms. In the present work, we focus on three of these families, which are algebraically more general than the remaining one and are characterized by the parameter ($q=0,\pm1$), with $q=+1$ corresponding to the most general canonical form. Their compact form is given by
\begin{multline}
K = \lambda_0\big(\theta^1\otimes\theta^1 + q\,\theta^2\otimes\theta^2\big) + \lambda_1\big(\theta^1\otimes\theta^2+\theta^2\otimes\theta^1\big) \\
+ \lambda_2\big(\theta^3\otimes\theta^4+\theta^4\otimes\theta^3\big) + \lambda_3\big(\theta^3\otimes\theta^3+\theta^4\otimes\theta^4\big), \label{eq:Kcanonical}
\end{multline}
The four scalar functions $\lambda_0,\lambda_1,\lambda_2,\lambda_3$ are, a priori, arbitrary functions of $\rho,z$, to be constrained by the Killing equations. Reduction of this equation due to the spin coefficients of Weyl's class yields the following categorization based on the values of $q$:

\scriptsize
\begin{table}[H]
\caption {Killing equations} \label{KillingEqns} 
\begin{center}
\begin{tabular}{c|c|c}
$q = 0$ & $q=-1$ & $q=+1$ \\ \hline
$0= (\kappa+\bar\kappa)(\lambda_1+\lambda_2+\lambda_3)$ &
$0= (\kappa+\bar\kappa)(\lambda_1+\lambda_2+\lambda_3)$ &
$\delta\lambda_0 = 2\left[ \bar\pi\lambda_0 -\kappa (\lambda_1+\lambda_2) - \bar\kappa\lambda_3\right]$\\
$0= (\kappa-\bar\kappa)(\lambda_1+\lambda_2-\lambda_3)$ &
$0= (\kappa-\bar\kappa)(\lambda_1+\lambda_2-\lambda_3)$&
$\delta \lambda_1 = -2\left[ \kappa\lambda_0 -\bar\pi(\lambda_1+\lambda_2) - \pi\lambda_3 \right]$\\
$\delta\lambda_0 = 2\bar\pi\lambda_0$&
$\delta\lambda_0 = 2\bar\pi\lambda_0$&
$\delta \lambda_2 = 4\alpha\lambda_3$\\
$\delta \lambda_1=-\kappa\lambda_0 +2\left[\bar\pi(\lambda_1+\lambda_2) +\pi\lambda_3 \right]$&
$\delta \lambda_1= 2\left[\bar\pi(\lambda_1+\lambda_2) +\pi\lambda_3 \right]$&
$\delta\lambda_3 = -4\bar\alpha\lambda_3$\\
$\delta \lambda_2 = 4\alpha\lambda_3$&$\delta \lambda_2 = 4\alpha\lambda_3$&$\delta \lambda_2 = -\bar\delta\lambda_3$\\
$\delta\lambda_3 = -4\bar\alpha\lambda_3$&$\delta\lambda_3 = -4\bar\alpha\lambda_3$&\\
$\delta \lambda_2 = -\bar\delta\lambda_3$&$\delta \lambda_2 = -\bar\delta\lambda_3$&
\end{tabular}
\end{center}
\end{table}
\normalsize

In this system obtained by Killing equations, we have already used the following equations in order to eliminate all dependence on $\beta,\tau,\nu$, to be written in terms of $\pi,\kappa,\alpha$
\begin{equation}
\alpha+\bar\beta-\pi = \bar\alpha+\beta-\bar\pi, \qquad \alpha+\bar\beta = \tau+\bar\pi = \nu+\bar\kappa = 0, \label{eq:antisym}
\end{equation}
which hold identically for the Weyl's class tetrad.

\subsection{The reducibility of Killing tensor}

Before we proceed to the main analysis we have to denote that we pursue irreducible Killing tensors, where the Killing tensor does not reduce to a combination of the metric tensor or lower rank Killing tensors with constant coefficients following the notion of reducibility introduced in \cite{dolan1989significance}. In case that Killing tensor is reducible it does not yield an additional integral of motion other than those obtained from the Killing vectors $\xi_t$, $\xi_\phi$ or $\xi_z$ and consequently no hidden symmetry. The cases below represent those obvious cases which admit reducible Killing tensors
\begin{itemize}
    \item $\kappa-\bar\kappa=0~~~~ \Longleftrightarrow~~ \Phi_0-\bar\Phi_0=V_z=0,$
    \item $\lambda_1+\lambda_2=\lambda_0=\lambda_3=0 ~~~~~\Rightarrow ~~\lambda_1=-\lambda_2=\text{const}.$
\end{itemize}
In the first case, using the Newman-Penrose equations in electrovacuum \eqref{an}-\eqref{MaxwellEV} one can easily prove that there is an additional Killing vector $\xi_z=\partial_z$ and the geodesic Hamilton-Jacobi equation is already completely separated by the three mutually commuting isometries $\xi_t,\xi_\varphi,\xi_z$ together with the mass-shell condition: $S=-Et+L\varphi+p_zz+S_\rho(\rho)$, with $S_\rho$ fixed. In this situation the question of whether the canonical rank-2 Killing tensor is \emph{reducible} is not an obstruction being tested, but just a consistency. Since no additional integral of motion is required for complete integrability, any Killing tensor produced by the Killing equations on such a background is guaranteed \emph{a priori} to decompose within the Killing tensors of lower or equal rank. We therefore do not classify these branches since it is already established by their isometry algebra, independently of the Killing tensor. Consequently, the cases of interest are in the following table

\begin{center}
\begin{table}[H]
\caption {Killing equations for $q=0,-1$} \label{KillingEqns1} 
\footnotesize
\centering
\begin{tabular}{c | c}
$q=0$ & $q=-1$\\ \hline
$\kappa+\bar\kappa=0$ & $\kappa+\bar\kappa=0$ \\
$\lambda_1+\lambda_2-\lambda_3=0$ & $\lambda_1+\lambda_2-\lambda_3=0$\\
$\lambda_0=0$ & $\delta\lambda_0 = 2\bar\pi\lambda_0$\\
$\delta\lambda_1 = 2(\pi+\bar\pi)(\lambda_1+\lambda_2)$ & $\delta\lambda_1 = 2(\pi+\bar\pi)(\lambda_1+\lambda_2)$\\
$\delta\lambda_2 = 4\alpha (\lambda_1+\lambda_2)$ & $\delta\lambda_2 = 4\alpha (\lambda_1+\lambda_2)$\\
$\delta \lambda_3 = -4\bar\alpha\lambda_3$ & $\delta \lambda_3 = -4\bar\alpha\lambda_3$\\
\end{tabular}
\end{table}
\end{center}
\normalsize

\subsection{Killing Tensor for $q=0$}
\label{sec:q0}

For $q=0$, the canonical Killing tensor equations reduce to the following relations
\begin{equation}
0=(\kappa+\bar\kappa)(\lambda_1+\lambda_2+\lambda_3), \qquad 0=(\kappa-\bar\kappa)(\lambda_1+\lambda_2-\lambda_3), \label{eq:q0alg}
\end{equation}
\begin{equation}
\delta\lambda_0=2\bar\pi\lambda_0, \qquad
\delta\lambda_1=-\kappa\lambda_0+2\big[\bar\pi(\lambda_1+\lambda_2)+\pi\lambda_3\big], \label{eq:q0diff1}
\end{equation}
\begin{equation}
\delta\lambda_2=4\alpha\lambda_3, \qquad \delta\lambda_3=-4\bar\alpha\lambda_3, \qquad \delta\lambda_2=-\bar\delta\lambda_3. \label{eq:q0diff2}
\end{equation}
Combining \eqref{kappa} and \eqref{eq:q0alg} partition the analysis into four cases where two of them give reducible Killing tensors as discussed.

\subsubsection{Case: $\kappa+\bar\kappa=0 \neq \kappa-\bar\kappa$}

Here $\lambda_\rho=\tfrac{1}{2\rho}$ and its substitution into \eqref{Laplacian}-\eqref{mu_z} gives $\lambda_{zz}=0$, so
\begin{equation}
\lambda=\frac12\ln\rho+cz, \qquad \mu=\frac14\ln\rho-\frac{c^2\rho^2}{2}+cz, \qquad \mu-\lambda=-\frac14\ln\rho-\frac{c^2\rho^2}{2}, \label{eq:branchiii_bg}
\end{equation}
with $c\neq0$. On this background $\kappa$ is imaginary while $\pi,\alpha$ remain real. The surviving equation gives $\lambda_1+\lambda_2=\lambda_3$. Differentiating this relation with $\delta$ and requiring consistency with the individual equations of \eqref{eq:q0diff1}-\eqref{eq:q0diff2}, we get
\begin{equation}
-\kappa\lambda_0+4\lambda_3(2\alpha+\pi)=0. \label{eq:branchiii_cons}
\end{equation}
Since $\kappa$ is purely imaginary while $\pi,\alpha,\lambda_3$ are real, the imaginary part of \eqref{eq:branchiii_cons} gives ,
\begin{equation}
\lambda_0=0.
\end{equation}
The real part of \eqref{eq:branchiii_cons} then reduces to $\lambda_3(2\alpha+\pi)=0$. Explicit computation gives
\begin{equation}
2\alpha+\pi =  \frac{e^{\lambda-\mu}}{2\sqrt2}\Big[-\frac{3}{2\rho}-2c^2\rho\Big] \neq0
\end{equation}
for any real $c\neq0$.
Since $2\alpha+\pi\neq0$ we have $\lambda_0=\lambda_3=0$ where \textbf{ the Killing tensor collapses entirely to the trivial Killing tensor} 

\begin{equation}K^{\mu\nu} \propto g^{\mu\nu}.\end{equation}

\subsection{Killing Tensor for $q=-1$}
\label{sec:qminus1}

For $q=-1$, most of the Killing equations are identical to  $q=0$ case, the differences emplaced in Killing equation for $\lambda_1$ containing no coupling to $\lambda_0$, while, unlike $q=0$ system, $\lambda_0$ is here entirely decoupled from $\lambda_1,\lambda_2,\lambda_3$, and its equation can be solved independently as follows,
\begin{equation}
0=(\kappa+\bar\kappa)(\lambda_1+\lambda_2+\lambda_3), \qquad 0=(\kappa-\bar\kappa)(\lambda_1+\lambda_2-\lambda_3), \label{eq:qm1alg}
\end{equation}
\begin{equation}
\delta\lambda_0=2\bar\pi\lambda_0,\label{eq:qm1diff1}
\end{equation}
\begin{equation}
\delta\lambda_1=2\big[\bar\pi(\lambda_1+\lambda_2)+\pi\lambda_3\big], \label{eq:qm1diff2}
\end{equation}
\begin{equation}
\delta\lambda_2=4\alpha\lambda_3, \qquad \delta\lambda_3=-4\bar\alpha\lambda_3, \qquad \delta\lambda_2=-\bar\delta\lambda_3. \label{eq:qm1diff3}
\end{equation}
Starting by \eqref{eq:qm1diff1} and after cancelling the common factor $e^{\lambda-\mu}$ and separating real and imaginary parts of the resulting equation we get,
\begin{equation}
 \big(\partial_\rho+i\partial_z\big)\lambda_0 =  \frac{\lambda_0}{\rho} \quad\Longrightarrow\quad \lambda_0=A\rho, \label{eq:lambda0universal}
\end{equation}
where $A$ is constant. Equation \eqref{eq:lambda0universal} holds in every case below, independent of the algebraic branch chosen for $\lambda_1,\lambda_2,\lambda_3$, the main difference from $q=0$ case, where coupling through $\delta\lambda_1$ forced $\lambda_0$ to be zero.

\subsubsection{Case: $\kappa+\bar\kappa=0 \neq \kappa-\bar\kappa$}

As before, we get
\begin{equation}
\lambda=\frac12\ln\rho+cz, \qquad \mu=\frac14\ln\rho-\frac{c^2\rho^2}{2}+cz, \qquad \mu-\lambda=-\frac14\ln\rho-\frac{c^2\rho^2}{2}, \label{eq:qm1case_iii_bg}
\end{equation} 
and the counterpart gives $\lambda_1+\lambda_2=\lambda_3$. Substituting into \eqref{eq:qm1diff1}--\eqref{eq:qm1diff3}, and using that $\pi,\alpha$ are real we get,
\begin{equation}
\delta\lambda_1 = 2\big[\pi\lambda_3+\pi\lambda_3\big]=4\pi\lambda_3, \qquad \delta\lambda_2=4\alpha\lambda_3, \qquad \delta\lambda_3=-4\alpha\lambda_3.
\end{equation}
Subsequently, the differentiation of $\lambda_1+\lambda_2=\lambda_3$ gives
\begin{equation}
\delta(\lambda_1+\lambda_2)=\delta\lambda_3\Rightarrow \lambda_3(\pi+2\alpha)=0. \label{eq:qm1case_iii_cons}
\end{equation}
Unlike the corresponding $q=0$ calculation, this condition involves no $\kappa$ or $\lambda_0$ at all, and is a single real equation. Considering the non-annihilation of $\pi+2\alpha$ forces $\lambda_3=0$ which implies $\lambda_1+\lambda_2=0$, and finally combining with \eqref{eq:lambda0universal} we obtain
\begin{equation}
\lambda_0=A\rho, \qquad \lambda_1=-\lambda_2=\text{const}, \qquad \lambda_3=0.
\end{equation}

\subsubsection{Summary for $q=0$ and $q=-1$}
The results of the previous paragraphs can be summarized as follows
\begin{align}
q=0,\ \kappa+\bar\kappa=0\neq\kappa-\bar\kappa: &\qquad K^{\mu\nu}=\lambda_1\,g^{\mu\nu}, \\
q=-1,\ \kappa+\bar\kappa=0\neq\kappa-\bar\kappa: &\qquad K^{\mu\nu}=A\xi_{(t}^\mu\xi_{\phi)}^\nu+\lambda_1\,g^{\mu\nu}, \label{eq:qm1case_iii_K}\\
\end{align}
\textbf{Every rank-2 Killing tensor arising from the $q=0$ and $q=-1$ canonical forms is therefore reducible}, and none of them carries a hidden symmetry beyond the linear invariants ($E$, $L$).

\subsection{Killing Tensor for $q=+1$}
\label{sec:qplus1}

For $q=+1$, the full Killing equations take the following form
\begin{align}
\delta\lambda_0 &= 2\big[\bar\pi\lambda_0-\kappa(\lambda_1+\lambda_2)-\bar\kappa\lambda_3\big], \label{K1}\\
\delta\lambda_1 &= -2\big[\kappa\lambda_0-\bar\pi(\lambda_1+\lambda_2)-\pi\lambda_3\big], \label{K2}\\
\delta\lambda_2 &= 4\alpha\lambda_3, \label{K3}\\
\delta\lambda_3 &= -4\bar\alpha\lambda_3, \label{K4}\\
\delta\lambda_2 &= -\bar\delta\lambda_3. \label{K5}
\end{align}
Unlike cases $q=0,-1$, there is no algebraic $(\kappa\pm\bar\kappa)$ factor providing two distinct branches explicitly. The natural division is therefore according to whether $\lambda_3$ vanishes identically, since this is what \eqref{K3}-\eqref{K5} directly constrain.

\subsubsection{Case $\lambda_3=0$}

Setting $\lambda_3=0$, equation \eqref{K3} reduces to $\delta\lambda_2=0$ implying 
\begin{equation}
\lambda_2=\text{const}. \label{eq:lambda2const}
\end{equation}
With $\lambda_3=0$, equations \eqref{K1}-\eqref{K2} become 

\small
\begin{equation}
\begin{rcases}
\hspace{0.5cm}\delta\lambda_0 &= 2\big[\pi\lambda_0-\kappa(\lambda_1+\lambda_2)\big] \\[4pt]
\delta(\lambda_1+\lambda_2) &= -2\kappa\lambda_0+2\pi(\lambda_1+\lambda_2)
\end{rcases}
\ {\genfrac{}{}{0pt}{}{+}{-}} \Longrightarrow 
\begin{aligned}
\delta(\lambda_0 +\lambda_1+\lambda_2)&= 2(\pi-\kappa)\,(\lambda_0 +\lambda_1+\lambda_2), \\[4pt]
\delta(\lambda_0 -\lambda_1-\lambda_2) &= 2(\pi+\kappa)\,(\lambda_0 -\lambda_1-\lambda_2).
\end{aligned}
\label{eq:K1K2reduced}
\end{equation}
\normalsize
the system \eqref{eq:K1K2reduced} decouples exactly and using the following equations 
\begin{equation}
\pi-\kappa = -\frac{e^{\lambda-\mu}}{\sqrt2}\big(\lambda_\rho+i\lambda_z\big), \qquad \pi+\kappa = \frac{e^{\lambda-\mu}}{\sqrt2}\Big(\lambda_\rho-\frac1\rho+i\lambda_z\Big), \label{eq:pikappa}
\end{equation}
we integrate as follows
\small
\begin{equation}
\begin{rcases}
\delta(\lambda_0 +\lambda_1+\lambda_2) = 2(\pi-\kappa)(\lambda_0 +\lambda_1+\lambda_2) \\[4pt]
\delta(\lambda_0 - \lambda_1-\lambda_2) = 2(\pi+\kappa)(\lambda_0 - \lambda_1-\lambda_2)
\end{rcases}
\Longrightarrow 
\begin{aligned}
\lambda_0 +(\lambda_1+\lambda_2) = N_1\,e^{2\lambda}, \\[3pt]
 \lambda_0 - (\lambda_1+\lambda_2) = N_2\,\rho^2e^{-2\lambda},\end{aligned}
\label{eq:solution1}
\end{equation}
\normalsize
with $N_1,N_2$ to be integration of constants. Finally the Killing tensor is constructed by the following scalar quantities 
\small
\begin{equation}
\lambda_0=\frac12\Big[N_1e^{2\lambda}+N_2\rho^2e^{-2\lambda}\Big], \quad
\lambda_1=\frac12\Big[N_1e^{2\lambda}-N_2\rho^2e^{-2\lambda}\Big]-\lambda_2, \quad
\lambda_2=\text{const}, \quad \lambda_3=0. \label{eq:caseAsol}
\end{equation}
\normalsize
Although $\lambda_0,\lambda_1$ are not constant, direct computation of the raised components shows that \eqref{eq:caseAsol} is in fact \textbf{reducible}: $K^{\mu\nu}=N_1\xi_t^\mu\xi_t^\nu+N_2\xi_\phi^\mu\xi_\phi^\nu-\lambda_2\,g^{\mu\nu}$ exactly with $a=N_1,\,b=N_2,\,c=-\lambda_2$.

\subsubsection{Case $\lambda_3\not=0$}

As established previously, equations \eqref{K3}-\eqref{K5} in this case yield the sole condition
\begin{equation}
C_3 e^{2(\mu-\lambda)} = \frac{F(\rho)+G(z) \label{eq:sepcond}}{2},
\end{equation}
for arbitrary functions $F,G$ and selecting $C_3 = 1/2$ for reasons of simplification. In fact, this is a generic constraint due to the existence of Killing tensor precisely on Weyl's solutions whose $g_{rr}$ and $g_{zz}$ is additively separable in Weyl coordinates $(\rho,z)$. \textbf{We will show that, within the asymptotically flat multipole class, this is impossible and forces our solution to reduce to flat spacetime}. In other words, there is no such an asymptotically flat gravitational potential which satisfies the Laplacian and the relation \eqref{eq:sepcond} at the same time. However, it would be interesting to examine whether similar constraints could emerge in other limits such as electrovacuum, relativistic fluids etc. and that's we are going to do in the next chapter considering the presence of Maxwell tensor. 

Moving forward, we expand the arbitrary functions $F(\rho)$ and $G(z)$ as power series after transforming to spherical coordinates. We then express the exponential function as a power series and employ the corresponding spherical coordinate expansions of the metric potentials $\lambda(\rho,z)$ and $\mu(\rho,z)$. Starting by 

\begin{equation}F(\rho)=1-c+\sum_{k=1}^{\infty}f_k\rho^{-k}\Rightarrow  ~~~~~F(r,x)=1-c+\sum_{k=1}^{\infty}\frac{f_k}{(r\sqrt{1-x^2})^{k}},\end{equation} 
\begin{equation}G(z)=c+\sum_{k=1}^{\infty}g_kz^{-k}\Rightarrow~~~~ G(r,x)=c+\sum_{k=1}^{\infty}\frac{g_k}{(rx)^{k}},\end{equation}
\begin{equation}
    \qquad \rho = r\sqrt{1-x^2},\ \ z=rx,\ \ x\equiv \cos\theta,
\end{equation}
so that $F+G\to1$ as  $r\to+\infty$. In parallel, we know that
\begin{equation}
\lambda(r,x) = -\sum_{n=0}^{\infty} a_n \frac{P_n(x)}{r^{n+1}}, 
\end{equation}
be the standard asymptotically flat multipole expansion, with $\lambda,\mu\to0$ as $r\to\infty$. The companion potential $\mu$ is fixed by
\begin{equation}
\mu_\rho = \rho(\lambda_\rho^2-\lambda_z^2), \qquad \mu_z = 2\rho\,\lambda_\rho\lambda_z, \label{eq:muweyl}
\end{equation}
which is exactly quadratic in $\lambda$ with no $\mu$-dependence on the right-hand side. Integrating \eqref{eq:muweyl} order by order gives the closed multipole solution
\begin{equation}
\mu(r,x) = -\sum_{n,m=0}^{\infty} \frac{(n+1)(m+1)\,a_na_m}{n+m+2}\,\frac{P_n(x)P_m(x) - P_{n+1}(x)P_{m+1}(x)}{r^{n+m+2}}. \label{eq:muclosed}
\end{equation}
Two structural properties of \eqref{eq:muclosed} will be used below:
\begin{enumerate}
\item[(i)] each term is homogeneous of degree $(n+m+2)$ in $(r,x)$,
\item[(ii)] the angular factor $P_n(x)P_m(x)-P_{n+1}(x)P_{m+1}(x)$ is a polynomial in $x$ of degree $n+m+2$.
\end{enumerate}
Using now $\Phi\equiv e^{2(\mu-\lambda)}-1=-2\lambda+2\mu+2\lambda^2+O(\lambda^3)$, property (i) implies the coefficient of $r^{-(k+1)}$ in $\Phi-1$ receives contributions only from $-2a_kP_k(x)$ and from quadratic terms with $n+m=k-1$.

\begin{prop}
If $\lambda$ is an asymptotically flat multipole solution satisfying the Laplacian, then $a_n=0;~ \forall n$ the spacetime is flat.
\end{prop}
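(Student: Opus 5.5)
The plan is to work in the setting of the Case $\lambda_3\neq0$, where the separability condition \eqref{eq:sepcond} gives $e^{2(\mu-\lambda)}=F(\rho)+G(z)$ with $F+G\to1$. The strategy is to compare, order by order in $1/r$, the multipole expansion of the left-hand side with the expansion of the right-hand side. The structural feature to exploit is that the right-hand side at order $r^{-(k+1)}$ contributes only two terms:
\begin{equation*}
\frac{f_{k+1}}{(1-x^2)^{(k+1)/2}}+\frac{g_{k+1}}{x^{k+1}}.
\end{equation*}
This is a very restricted angular function. It is singular at $x=0$ unless $g_{k+1}=0$, and at $x=\pm1$ unless $f_{k+1}=0$. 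On the other hand, by properties (i)--(ii) the left-hand side $\Phi$ at order $r^{-(k+1)}$ is a polynomial in $x$. This is the term $-2a_kP_k(x)$ plus quadratic and higher contributions built from $a_n$ with smaller indices and from \eqref{eq:muclosed}, all of which are polynomials in $x$.

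The key steps, carried out in order, are as follows.

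\textbf{Step 1.} Regularity of the left-hand side on the axis $x=\pm1$ and on the equatorial plane $x=0$ forces $f_k=g_k=0$ for all $k\ge1$. A polynomial in $x$ cannot equal a nonzero combination of $(1-x^2)^{-(k+1)/2}$ and $x^{-(k+1)}$. Strictly, I would argue at each order separately, using uniqueness of the asymptotic expansion in $r$ at fixed generic $x$. Hence $F+G\equiv1$, i.e.\ $e^{2(\mu-\lambda)}\equiv1$, so $\mu=\lambda$ identically.

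\textbf{Step 2.} Substitute $\mu=\lambda$ into \eqref{eq:muweyl}. This gives $\lambda_\rho=\rho(\lambda_\rho^2-\lambda_z^2)$ and $\lambda_z=2\rho\lambda_\rho\lambda_z$.

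\textbf{Step 3.} Prove by induction on $n$ that all $a_n$ vanish. The alternative is to argue directly from Step 2, which is cleaner and is the route I would try first. If $\lambda_z\neq0$ on an open set, then $\lambda_\rho=1/(2\rho)$ there. The first equation of Step 2 then gives $\lambda_z^2=-1/(4\rho^2)<0$, a contradiction. So $\lambda_z\equiv0$, and then $\lambda_\rho=\rho\lambda_\rho^2$ gives $\lambda_\rho=0$ or $\lambda_\rho=1/\rho$. The second option is $\lambda=\ln\rho+{\rm const}$, which is not asymptotically flat, since it is incompatible with $\lambda\to0$. Hence $\lambda$ is constant, and asymptotic flatness forces $\lambda\equiv0$. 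Then all $a_n=0$, $\mu=0$, and the metric is flat.

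\textbf{Main obstacle.} The delicate step is Step 1: justifying the termwise comparison when the series for $F$ and $G$ are only formal and are re-expanded in $(r,x)$. To do this cleanly I would fix $x\in(0,1)$ and view both sides as asymptotic series in $1/r$. I would then use the fact that the left-hand coefficients extend polynomially, hence boundedly, to $x\to0$ and $x\to1$, whereas any nonzero $g_{k+1}$ or $f_{k+1}$ produces a blow-up. The lowest nonvanishing index $k$ must be handled first, and an induction then clears the rest. A fallback, if needed, is the explicit induction on $a_k$ via the $-2a_kP_k$ term: matching the polynomial part at each order against an identically zero right-hand side gives $a_k$ in terms of lower coefficients, all of which are already zero.
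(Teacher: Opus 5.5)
Your argument is correct, but it is organized differently from the paper's. The paper runs one induction on the order $r^{-(k+1)}$. Assuming $a_0,\dots,a_{k-1}=0$, every nonlinear contribution at that order drops out and the left side collapses to a multiple of $a_kP_k(x)$. Matching this against $f_{k+1}(1-x^2)^{-(k+1)/2}+g_{k+1}x^{-(k+1)}$ then kills $g_{k+1}$, $f_{k+1}$ and $a_k$ together.

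You decouple the two eliminations. Your Step 1 uses the same polynomial-versus-singular comparison. Because every coefficient of $e^{2(\mu-\lambda)}$ in powers of $1/r$ is a polynomial in $x$ for arbitrary $a_n$, it removes all $f_k,g_k$ at once with no induction. Your remark about treating the lowest index first is therefore unnecessary. You also never have to track which nonlinear terms survive at a given order. The paper handles this bookkeeping loosely: it mentions only quadratic terms, whereas cubic and higher powers from the exponential also enter and vanish only by the induction hypothesis.

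Your Step 3 then replaces the order-by-order reading of $a_k$ with the first-order equations for $\mu$, and in doing so proves a stronger local fact. Having $\mu=\lambda$ on an open set already forces $\lambda_z=0$ and $\lambda$ equal to a constant or to $\ln\rho+\mathrm{const}$. Both are locally flat, the latter being Rindler-type, so asymptotic flatness is used only to select $\lambda\equiv0$. In fact, once $\lambda_z\equiv0$, decay as $z\to\pm\infty$ at fixed $\rho$ gives $\lambda\equiv0$ directly, and the $\lambda_\rho$ dichotomy can be skipped.

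The paper's route stays entirely inside the multipole expansion and is quicker to state. Yours is more robust and isolates exactly where asymptotic flatness enters. Your fallback is precisely the paper's induction.
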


\begin{proof} We start by matching the $k=1$ term of $F+G$ in $(r,x)$ spherical coordinates and the $r^{-1}$ coefficient of $\Phi-1$ is $-2a_0$  
\begin{equation}
-2a_0 = \frac{f_1}{\sqrt{1-x^2}}+\frac{g_1}{x}, \qquad \forall x\in(-1,1),
\end{equation}
and linear independence of $1,(1-x^2)^{-1/2},x^{-1}$ implies that $f_1=g_1=0$ and $a_0=0$.

By (i), any quadratic term at order $r^{-(k+1)}$ requires $n+m=k-1$, so at least one index is lower than $k$ and vanishes by hypothesis while the order $(k+1)$ reduces to $-2a_kP_k(x)$, giving
\begin{equation}
-2a_kP_k(x) = \frac{f_{k+1}}{(1-x^2)^{(k+1)/2}}+\frac{g_{k+1}}{x^{k+1}}, \qquad \forall x\in(-1,1). \label{eq:matchk}
\end{equation}
Expanding in the Legendre basis one can see that the left side can be supported only at mode $k$ and $x^{-(k+1)}$ is non-integrable at $x=0$, implying $g_{k+1}=0$. On the other hand$(1-x^2)^{-(k+1)/2}$, not being a polynomial, has infinitely many non-zero modes whenever $f_{k+1}\ne0$. As it becomes understood orthogonality results in the trivial solution $f_{k+1}=g_{k+1}=0$, and \eqref{eq:matchk} reduces to a flat spacetime with zero gravitational potential $a_kP_k(x)=0\Rightarrow a_k=0$.
\end{proof}

This proposition shows that any static, axisymmetric vacuum solution admitting the generic branch ($\lambda_3\neq0$) of the most general Killing tensor ($q=+1$) is necessarily flat. Hence, the existence of a non-trivial gravitational potential is incompatible with this class of Killing tensors.

\subsubsection{Numerical investigation of the general case $\lambda_3\neq0$}

In this paragraph we attempted to find potentials which satisfy the separation condition obtained by the general case of Killing tensor with $q=+1$. Based on our conviction that within the abstract coefficients of the multipole expansion may happen to exist a unique one which yields a potential where the separation is possible.  
Then, for a static, axisymmetric vacuum Weyl-class metric
$$ds^2 = -e^{2\lambda}dt^2 + e^{2(\mu-\lambda)}(d\rho^2+dz^2) + \rho^2 e^{-2\lambda} d\phi^2,$$
does the metric function $\Phi(\rho,z) = e^{2(\mu-\lambda)}$ admit an additive separation 
$$\Phi(\rho,z) = F(\rho) + G(z)$$
for any physically nontrivial choice of the Weyl potential $\lambda$?
The main point of our approach is to use the multipole expansion of the general potential $\lambda(r,x)$ and its counterpart $\mu(r,x)$ 

\begin{enumerate}
    \item \textbf{$\lambda$ as a multipole expansion.} Since $\lambda$ satisfies the flat-space axisymmetric Laplace equation, $\lambda$ was built as a truncated sum of the standard solid-harmonic multipoles $\psi_n = $, with free coefficients $a_n$.

    \item \textbf{Validation of $\lambda$.} The Laplace-equation residual
    $$\lambda_{,\rho\rho} + \frac{1}{\rho}\lambda_{,\rho} + \lambda_{,zz}$$
    was confirmed numerically small ($\sim10^{-4}$, shrinking with grid refinement) away from the symmetry axis and grid boundary, for both single-term and multi-term $\lambda$.

    \item \textbf{ Construction of $\mu$.} The other metric function $\mu$ was constructed numerically by integrating the following two field equations
    $$\mu_{,\rho} = \rho(\lambda_{,\rho}^2-\lambda_{,z}^2), \qquad \mu_{,z} = 2\rho\,\lambda_{,\rho}\lambda_{,z}$$
    relating the derivatives of $\mu$ to those of $\lambda$. The validation of the mixed derivatives of $\mu$ was confirmed directly via the curl compatibility check (Path-independence)
    $$\left|\frac{\partial \mu_{,\rho}}{\partial z} - \frac{\partial \mu_{,z}}{\partial \rho}\right| \sim 10^{-4},$$
    rather than through a fragile multi-path cumulative integration.

    \item \textbf{Separability criterion.} The separability criterion for $\Phi$ reduces to a single condition:
    $$\Phi_{\rho z} = 0.$$
    This was computed numerically and used, in squared averaged form, as an optimization objective.

    \item \textbf{Search over coefficients.} With $a_0$ fixed to $1$ (excluding the degenerate flat-space solution, which trivially but uninterestingly satisfies the constraint), the remaining multipole coefficients were searched via \texttt{scipy.optimize} (Nelder-Mead and L-BFGS-B), across increasing numbers of free coefficients (4, 5, 8) and increasingly wide coefficient bounds, with multiple random restarts to check for consistency.
\end{enumerate}

\subsubsection*{Findings of the numerical investigation}

\begin{itemize}
    \item The \textbf{trivial flat-space limit} (all higher coefficients near zero) satisfies the separability constraint because $\Phi$ reduces to a constant there which is a case out of interest since there is no evidence of Weyl's solutions in this limit.

    \item For nonzero monopole (a physically real source), repeated optimization converged to a \textbf{nonzero separability residual}, with the achievable minimum decreasing as more multipole freedom and wider coefficient ranges were allowed but not vanishing (roughly $0.85 \to 0.66 \to 0.58 \to 0.56 \to 0.12$, sum of squares metric over a fixed interior region).

\end{itemize}

Finally, within the multipole-expansion ansatz tested (up to 8 free coefficients with several bound ranges), \textbf{no clearly validated, non-trivial Weyl solution was found that exactly satisfies $\Phi(\rho,z) = F(\rho)+G(z)$}. This serves as a strong evidence to make us to believe that the constraint being non-trivial to satisfy for genuinely massive sources.

\section{\large{Non-Integrability of Weyl's class -- the Schwarzschild exception}}
\label{Chapter 5}

The failure of the condition $e^{2(\mu-\lambda)}=F(\rho)+G(z)$ to hold for Schwarzschild does not imply the absence of a genuine hidden symmetry. In fact, it indicates only that the null tetrad $\theta^{3,4}$ must be adapted to the coordinates in which the geodesic Hamilton-Jacobi equation actually separates. Introducing prolate spheroidal coordinates $(x,y)$, $y\geq1$, $x\in[-1,1]$, via
\begin{equation}
\rho = M\sqrt{(y^2-1)(1-x^2)}, \qquad z=Mxy, \label{eq:prolate}
\end{equation}
we take $d\rho^2+dz^2 = M^2(y^2-x^2)\big[(y^2-1)^{-1}dy^2+(1-x^2)^{-1}dx^2\big]$, so that the spatial part of the metric becomes proportional to
\begin{equation}
\Phi(x,y) \equiv e^{2(\mu-\lambda)}(y^2-x^2). \label{eq:Phidef}
\end{equation}
Repeating the derivation built from $dy/\sqrt{y^2-1}$ and $dx/\sqrt{1-x^2}$ in place of $d\rho,dz$, the general branch with $\lambda_3\neq0$ of $q=+1$ Killing tensor exists precisely when
\begin{equation}
\Phi(x,y) = F(y)+G(x), \label{eq:sepxy}
\end{equation}
the natural analogue, in adapted coordinates, of the separability condition established earlier for the literal Weyl variables.

\subsection*{Schwarzschild spacetime}

For Schwarzschild we have, 

\begin{equation}
    \lambda=\tfrac12\ln\frac{y-1}{y+1}, \qquad \mu=\tfrac12\ln\frac{y^2-1}{y^2-x^2},
\end{equation}
\begin{equation}
\Phi = \frac{y^2-1}{y^2-x^2}\cdot\frac{y+1}{y-1}\cdot(y^2-x^2) = (y+1)^2, \label{eq:PhiSchw}
\end{equation}
a function of $y$ alone, satisfying \eqref{eq:sepxy} trivially with $G\equiv0$. The resulting Killing tensor, reconstructed from the $(x,y)$ analogue and converted back through the transformations $y=r/M-1$, $x=\cos\theta$, and subsequently gives
\begin{equation}
K^{\mu\nu}p_\mu p_\nu = p_\theta^2+\frac{L^2}{\sin^2\theta} = Q, \label{eq:carter}
\end{equation}
recovering Carter's constant exactly. This resolves the original question of the project: the canonical $q=+1$ Killing tensor \emph{does} constrain the admissible background, but only once it is expressed in the coordinates natural to the source's own confocal structure; in these coordinates Schwarzschild is not an exception to the theorem but its unique nontrivial realization.

Unlike every Killing tensor of the previous sections, this result survives even under the broadened reducibility criterion. Schwarzschild admits only the vector algebra $\{\xi_t,\xi_\phi\}$ within Weyl's class incapable of reproducing a hidden symmetry in contrast to the $r,\theta$ sector. This is the static analogue of the tensor constructed for Kerr in \cite{dolan1989significance}, which was verified to be irreducible in relation to Kerr's own two Killing vectors $\xi_t,\xi_\phi$ by exactly the same criterion. Schwarzschild's irreducible Killing tensor is the only genuinely irreducible rank-2 Killing tensor found so far in the present categorization.

\subsection*{Zipoy-Voorhees family ($\gamma$-metric)}

For the general Zipoy-Voorhees potential,
\begin{equation}
\lambda=\frac{\delta}{2}\ln\frac{y-1}{y+1}, \qquad \mu=\frac{\delta^2}{2}\ln\frac{y^2-1}{y^2-x^2}, \label{eq:ZVpotential}
\end{equation}
the analogous computation gives
\begin{equation}
\Phi(x,y) = (y^2-x^2)^{1-\delta^2}(y^2-1)^{\delta^2}\Big(\frac{y+1}{y-1}\Big)^{\delta}. \label{eq:PhiZV}
\end{equation}
For $\delta\neq1$, the factor $(y^2-x^2)^{1-\delta^2}$ does not vanish from the exponent and entangles $x$ and $y$. As a matter of fact, no choice of $F(y),G(x)$ can reproduce \eqref{eq:PhiZV} as a sum, since the dependence through $(y^2-x^2)$ raised to a nontrivial power cannot be split additively for a general value of $\delta$ unless $\delta=1$ (Schwarzschild spacetime). Condition \eqref{eq:sepxy} therefore \textbf{fails for every $\delta\neq1$}, and the $q=+1$ canonical Killing tensor admits no branch ($\lambda_3\neq0$) solution on these backgrounds in the adapted coordinates, just as it failed in the literal $\rho,z$ coordinates.

\textbf{The latter serves as an additional confirmation obtained from the entanglement of the Killing equations of canonical Killing tensor forms indicating that the $\gamma$-metrics with $\delta\neq1$ do not admit a hidden symmetry exhibiting chaotic geodesic motion at the same time}. The case $\lambda_3\equiv0$ remains available for every $\delta$, as it holds for any harmonic $\lambda$ whatsoever, but, being reducible, it carries no dynamical content for any $\delta$ because the Killing tensor is reducible. The $\gamma$-metric family therefore illustrates that hidden symmetry requires the separability condition \eqref{eq:sepxy} to hold, which happens only at $\delta=1$.

\subsection*{Chazy-Curzon spacetime}

The Chazy-Curzon potential reads 
\begin{equation}
\lambda=-\frac{M}{r}=-\frac{M}{\sqrt{\rho^2+z^2}},
\end{equation} 
which is the simplest nontrivial Weyl solution, corresponding to the first term of the asymptotically flat multipole expansion of the generic gravitational potential $\lambda$. Unlike Schwarzschild and the $\gamma$-metric family, it does not possess the two-focus structure that motivates the transformation of the prolate coordinates \eqref{eq:prolate} but its natural adapted coordinates are the ordinary spherical coordinates $(r,x)$, $x=\cos\theta$, under which
\begin{equation}
\lambda=-\frac{M}{r}, \qquad \mu=-\frac{M^2(1-x^2)}{2r^2}.
\end{equation}
Testing the corresponding separability condition, whether formulated in the Weyl's canonical coordinates $\rho,z$ or in any confocal $x,y$ coordinate system one finds no choice of foci reduces $e^{2(\mu-\lambda)}$, to an additively separable function. In other words, the factor $e^{-M^2(1-x^2)/r^2}$ mixes $r,x$ in a way that make its absorption inevitable by any decomposition, for the same structural reason established generally in the previous chapter. Consequently the Chazy-Curzon spacetime admits, like the generic $\gamma$-metric, only a reducible Killing tensor, which carries no hidden symmetry and no analogue of Carter's constant able to provide integrable geodesics.

At last, these results brings to the surface \textbf{the Schwarzschild solution as the only exception within Weyl's class} among those examined, and consistent with the general no-go result once expressed in its own natural confocal coordinates — for which the generic branch of the $q=+1$ canonical Killing tensor is realized, thereby explaining, from the standpoint of the canonical Killing tensor classification, why Schwarzschild alone among these type I backgrounds admits a complete set of first integrals for geodesic motion.
\begin{center}
\begin{tabular}{c|c|c|c}
Spacetime & Separability of $\Phi$  &  Integrability\\ \hline
Schwarzschild ($\delta=1$) & $\Phi=F(y)=(y+1)^2$ & Integrable\\
Zipoy-Voorhees ($\delta\neq1$) & Non-separable & Non-integrable\\
Chazy-Curzon & Non-separable & Non-integrable
\end{tabular}
\end{center}

\section{\large{Non-Integrability of Weyl's class in Vacuum}}
\label{Chapter 5}
The failure of the condition $e^{2(\mu-\lambda)}=F(\rho)+G(z)$ to hold for Schwarzschild does not by itself settle the question of separability. It indicates only that the null tetrad $\theta^{3,4}$ must be adapted to the coordinates in which the geodesic Hamilton-Jacobi equation actually separates. Introducing prolate spheroidal coordinates $(x,y)$, $y\geq1$, $x\in[-1,1]$, via
\begin{equation}
\rho = M\sqrt{(y^2-1)(1-x^2)}, \qquad z=Mxy, \label{eq:prolate}
\end{equation}
we take $d\rho^2+dz^2 = M^2(y^2-x^2)\big[(y^2-1)^{-1}dy^2+(1-x^2)^{-1}dx^2\big]$, so that the spatial part of the metric becomes proportional to
\begin{equation}
\Phi(x,y) \equiv e^{2(\mu-\lambda)}(y^2-x^2). \label{eq:Phidef}
\end{equation}
Repeating the derivation built from $dy/\sqrt{y^2-1}$ and $dx/\sqrt{1-x^2}$ in place of $d\rho,dz$, the general branch with $\lambda_3\neq0$ of the $q=+1$ Killing tensor exists precisely when
\begin{equation}
\Phi(x,y) = F(y)+G(x), \label{eq:sepxy}
\end{equation}
the natural analogue, in adapted coordinates, of the separability condition established earlier for the literal Weyl variables.
\subsection*{Schwarzschild spacetime}
For Schwarzschild we have,
\begin{equation}
    \lambda=\tfrac12\ln\frac{y-1}{y+1}, \qquad \mu=\tfrac12\ln\frac{y^2-1}{y^2-x^2},
\end{equation}
\begin{equation}
\Phi_S = \frac{y^2-1}{y^2-x^2}\cdot\frac{y+1}{y-1}\cdot(y^2-x^2) = (y+1)^2, \label{eq:PhiSchw}
\end{equation}
a function of $y$ alone, satisfying \eqref{eq:sepxy} trivially with $G\equiv0$. The resulting Killing tensor, reconstructed from the $(x,y)$ analogue and converted back through the transformations $y=r/M-1$, $x=\cos\theta$, gives
\begin{equation}
K^{\mu\nu}p_\mu p_\nu = p_\theta^2+\frac{L^2}{\sin^2\theta} = Q, \label{eq:carter}
\end{equation}
recovering Carter's constant exactly. The canonical $q=+1$ Killing tensor \emph{does} constrain the admissible background, and Schwarzschild is the unique member of the $\gamma$-metric family tha follow the constrain but this becomes evident where it is expressed in the coordinates natural to the source's own confocal structure.
However, this dynamical result does not survive as an irreducibility considering the isometry algebra. Schwarzschild is not merely axisymmetric but spherically symmetric, and $Q_{\rm Carter}=p_\theta^2+L^2/\sin^2\theta$ is precisely the expression for the total orbital angular momentum squared, $L_x^2+L_y^2+L_z^2$, in spherical canonical momenta. The corresponding Killing tensor is therefore
\begin{equation}
K^{\mu\nu} = \xi_x^\mu\xi_x^\nu+\xi_y^\mu\xi_y^\nu+\xi_z^\mu\xi_z^\nu,
\end{equation}
where $\xi_x,\xi_y,\xi_z$ generate the full $SO(3)$ rotation group which is a reducible combination of Killing-vectors, once $\xi_x,\xi_y$ are counted along with $\xi_\varphi=\xi_z$.

\subsection*{Zipoy-Voorhees family ($\gamma$-metric)}
For the general Zipoy-Voorhees potential,
\begin{equation}
\lambda_{\rm ZV}=\frac{\delta}{2}\ln\frac{y-1}{y+1}, \qquad \mu_{\rm ZV}=\frac{\delta^2}{2}\ln\frac{y^2-1}{y^2-x^2}, \label{eq:ZVpotential}
\end{equation}
the analogous computation gives
\begin{equation}
\Phi_{\rm ZV}(x,y) = (y^2-x^2)^{1-\delta^2}(y^2-1)^{\delta^2}\Big(\frac{y+1}{y-1}\Big)^{\delta}. \label{eq:PhiZV}
\end{equation}
For $\delta\neq1$, the factor $(y^2-x^2)^{1-\delta^2}$ does not vanish from the exponent and entangles $x$ and $y$. As a matter of fact, no choice of $F(y),G(x)$ can reproduce \eqref{eq:PhiZV} as a sum, since the dependence through $(y^2-x^2)$ raised to a nontrivial power cannot be split additively for a general value of $\delta$ unless $\delta=1$ (Schwarzschild spacetime). Condition \eqref{eq:sepxy} therefore \textbf{fails for every $\delta\neq1$}, and the $q=+1$ canonical Killing tensor admits no branch ($\lambda_3\neq0$) solution on these backgrounds in the adapted coordinates, just as it failed in the literal $\rho,z$ coordinates.
\textbf{The latter serves as an additional confirmation obtained from the entanglement of the Killing equations of canonical Killing tensor forms indicating that the $\gamma$-metrics with $\delta\neq1$ do not admit a hidden symmetry exhibiting chaotic geodesic motion at the same time}. The case $\lambda_3\equiv0$ remains available for every $\delta$, as it holds for any harmonic $\lambda$ whatsoever, but, being reducible, it carries no dynamical content for any $\delta$ because the Killing tensor is reducible. The $\gamma$-metric family therefore illustrates that hidden symmetry requires the separability condition \eqref{eq:sepxy} to hold, which happens only at $\delta=1$.
\subsection*{Chazy-Curzon spacetime}
The Chazy-Curzon potential reads
\begin{equation}
\lambda=-\frac{M}{r}=-\frac{M}{\sqrt{\rho^2+z^2}},
\end{equation}
which is the simplest nontrivial Weyl solution, corresponding to the first term of the asymptotically flat multipole expansion of the generic gravitational potential $\lambda$. Unlike Schwarzschild and the $\gamma$-metric family, it does not possess the two-focus structure that motivates the transformation of the prolate coordinates \eqref{eq:prolate}, but its natural adapted coordinates are the ordinary spherical coordinates $(r,x)$, $x=\cos\theta$, under which
\begin{equation}
\lambda=-\frac{M}{r}, \qquad \mu=-\frac{M^2(1-x^2)}{2r^2}.
\end{equation}
Testing the corresponding separability condition, whether formulated in Weyl's canonical coordinates $\rho,z$ or in any confocal $x,y$ coordinate system, one finds no choice of foci reduces $e^{2(\mu-\lambda)}$ to an additively separable function. In other words, the factor $e^{-M^2(1-x^2)/r^2}$ mixes $r,x$ in a way that makes its absorption inevitable by any decomposition, for the same structural reason established generally in the previous chapter. Consequently the Chazy-Curzon spacetime admits, like the generic $\gamma$-metric, only a reducible Killing tensor, which carries no hidden symmetry and no analogue of Carter's constant able to provide integrable geodesics.

At last, these results undermark a single statement: within Weyl's class, every rank-2 Killing tensor examined is reducible in line with isometry algebra of the background. Schwarzschild's completeness of first integrals is genuine, but it is a consequence of enhanced manifest symmetry, exactly as for Levi-Civita, not of a hidden symmetry of the type sought throughout this classification.
\begin{center}
\begin{tabular}{c|c|c|c}
Spacetime & Separability of $\Phi$  &  Integrability\\ \hline
Schwarzschild ($\delta=1$) & $\Phi=F(y)=(y+1)^2$ & Integrable \\
Zipoy-Voorhees ($\delta\neq1$) & Non-separable & Non-integrable\\
Chazy-Curzon & Non-separable & Non-integrable
\end{tabular}
\end{center}

\section{Killing tensors in Electrovacuum}\label{Chapter 6}

The presence of an electromagnetic potential modifies the conserved quantities associated with geodesic motion \cite{carter1979generalized}. In particular, additional constraints arise when one requires a hidden symmetry to generate a conserved quantity not only along geodesics, but also along the trajectories of charged particles. For these orbits, a constant of motion $K$ on $T^*M$ need not be purely quadratic in the momentum, the general ansatz is a polynomial expansion in $p_\mu$,
\begin{equation}
K = \overset{0}{K} + \overset{1}{K}{}^{\mu}p_\mu + \overset{2}{K}{}^{\mu\nu}p_\mu p_\nu + \cdots, \label{eq:expansion}
\end{equation}
for symmetric tensor fields $\overset{0}{K},\overset{1}{K}{}^\mu,\overset{2}{K}{}^{\mu\nu},\ldots$ on $M$. Acting on $K$ with the charged geodesic spray and requiring the result to vanish for arbitrary $p_\mu$ forces each power of $p_\mu$ to cancel separately, yielding one condition for every order $n$ linking $\overset{n}{K}$ to $\overset{n+1}{K}$:
\begin{equation}
\nabla^{(\alpha}\overset{n}{K}{}^{\mu\cdots\nu)} + (n+1)\,q\,F_m{}^{(\alpha}\overset{n+1}{K}{}^{\mu\cdots\nu)m} = 0. \label{eq:generalcond}
\end{equation}
Specializing to a purely quadratic constant of motion, $K=\overset{2}{K}{}^{\mu\nu}p_\mu p_\nu$ with $\overset{0}{K}=\overset{1}{K}=0$, equation \eqref{eq:generalcond} splits into two \emph{independent} requirements, obtained at $n=1$:
\begin{equation}
\nabla^{(\alpha}K^{\mu)}=0\end{equation}
\begin{equation}
    F_\mu{}^{(\alpha}K^{\nu)\mu}=0. \label{eq:constraint}
\end{equation}
The first is simply the ordinary Killing equation for Killing vectors, unaffected by the charge $q$ and the second, vanishing only because $\overset{1}{K}=0$ is assumed at the outset, is the purely algebraic alignment condition between $K^{\mu\nu}$ and the Maxwell field $F_{\mu\nu}$. 

Both must hold simultaneously for a constant of motion $K$ and when the particle is charged a tensor has to satisfy both the geometric Killing equation on a vacuum background and the second condition in \eqref{eq:constraint}. This is exactly the criterion used at the second part of this work setting up an additional constraint about the Weyl's Killing tensor in electrovacuum, and the same mechanism underlying the Kerr-Newman Carter constant $K^{ab}=\alpha F^a{}_mF^{mb}+\beta g^{ab}$, where alignment holds identically because $K$ is built algebraically from $F$ itself.

Based on these, we present the following equations which serve as a constraint, using  $\Phi_2=-\bar\Phi_0$ 
\begin{equation}\label{eq:gKM1}
(\lambda_0+\lambda_1+\lambda_2+\lambda_3)(\Phi_0+\bar\Phi_0)=0,
\end{equation}
\begin{equation}\label{eq:gKM2}
\lambda_0(\Phi_0-\bar\Phi_0)=(\lambda_1+\lambda_2-\lambda_3)(\Phi_0-\bar\Phi_0)=0,
\end{equation}
 the latter are not correlated to $q$ in contrast to the following ones:
\small
{\begin{align}\label{eq:qKM}
q=0:&\quad (\lambda_1+\lambda_2\pm\lambda_3)(\Phi_0\pm\bar\Phi_0)=0,\\\label{eq:qKM1}
q=+1:&\quad (\lambda_0+\lambda_1+\lambda_2+\lambda_3)(\Phi_0+\bar\Phi_0)=0, \ \ \quad (\lambda_0+\lambda_1+\lambda_2-\lambda_3)(\Phi_0-\bar\Phi_0)=0,\\\label{eq:qKM2}
q=-1:&\quad (-\lambda_0+\lambda_1+\lambda_2+\lambda_3)(\Phi_0+\bar\Phi_0)=0,\quad (\lambda_0-\lambda_1-\lambda_2+\lambda_3)(\Phi_0-\bar\Phi_0)=0.
\end{align}}
\normalsize
As becomes understood some of the equations \eqref{eq:qKM}-\eqref{eq:qKM2} happen to coincide with those which are unaffected by the values of q \eqref{eq:gKM1}-\eqref{eq:gKM2}. The following table concentrates all the possible cases obtained by the equations above. As expected, for any $q$ the first line corresponds to the \textbf{trivial case} resulting in irreducible Killing tensors as in the vacuum limit therefore we will not be examined in the following analysis.

\begin{center}
\begin{table}[H]
\caption{Killing-Maxwell constraint for Weyl's class in ElectroVacuum}
\label{tab:weyl_electrovac}
\begin{equation*}
\begin{array}{c|c|c}
\boldsymbol{q} & \textbf{Killing tensor} & \textbf{Maxwell tensor } (\Phi_0=-\bar\Phi_2)\\ \hline
0 & \lambda_0=\lambda_3=0,\ \lambda_1+\lambda_2=0 & \Phi_0\neq0\\
0 & \lambda_0=0,\ \lambda_1+\lambda_2=-\lambda_3\neq0 & \Phi_0+\bar\Phi_0\neq0\\
0 & \lambda_0=0,\ \lambda_1+\lambda_2=\lambda_3\neq0 & \Phi_0-\bar\Phi_0\neq0\\ \hline
-1 & \lambda_0=\lambda_3=0,\ \lambda_1+\lambda_2=0 & \Phi_0\neq0\\
-1 & \lambda_0=0,\ \lambda_1+\lambda_2=-\lambda_3\neq0 & \Phi_0+\bar\Phi_0\neq0\\
-1 & \lambda_0=0,\ \lambda_1+\lambda_2=\lambda_3\neq0 & \Phi_0-\bar\Phi_0\neq0\\ \hline
+1 & \lambda_0=\lambda_3=0,\ \lambda_1+\lambda_2=0 & \Phi_0\neq0\\
+1 & \lambda_0=0,\ \lambda_1+\lambda_2=-\lambda_3\neq0 & \Phi_0+\bar\Phi_0\neq0\\
+1 & \lambda_0=0,\ \lambda_1+\lambda_2=\lambda_3\neq0 & \Phi_0-\bar\Phi_0\neq0\\
+1 & \lambda_3=0,\ \lambda_0=-(\lambda_1+\lambda_2)\neq0 & \Phi_0+\bar\Phi_0\neq0\\
+1 & \lambda_0=-(\lambda_1+\lambda_2+\lambda_3)\neq0,\ \lambda_3\neq0 & \Phi_0+\bar\Phi_0\neq0
\end{array}
\end{equation*}
\end{table}
\end{center}
\normalsize

All canonical forms considered in this work are described by Killing equations that exhibit structural differences only at the derivatives of the scalars $\lambda_0$ and $\lambda_1$. When $\lambda_0=0$, all three canonical forms fall into the exact same Killing tensor. In this scheme, aside from the trivial vacuum case $\Phi_0=0$, cases 1) $\lambda_0=\lambda_1+\lambda_2+\lambda_3=0$, $\Phi_0+\bar\Phi_0\neq0$, and 2) $\lambda_0=\lambda_1+\lambda_2-\lambda_3=0$, $\Phi_0-\bar\Phi_0\neq0$, result in the same Killing tensor for any value of $q$. In the first case, $\Phi_0-\bar\Phi_0=V_z=0$, which implies the simultaneous annihilation of $\lambda_z$, indicating the existence of the additional Killing vector $\partial_z$ and, by extension, a reducible Killing tensor. Nonetheless, even our goal is to extract solutions with irreducible Killing tensors we found valuable to discuss this case, for reasons that will be revealed in the last chapter of this work. Next, we follow the same procedure as in the vacuum limit.

\subsection{Case $\lambda_0= \lambda_1+\lambda_2+\lambda_3=0$ and $\Phi_0+\bar\Phi_0\neq0$ }\label{Section6.1.1}

Starting by the annihilation of the imaginary part of $\Phi_0$ the Maxwell equation becomes 
\begin{equation}
V_{\rho\rho}+\frac{V_\rho}{\rho} = 2\lambda_\rho V_\rho    \qquad \Rightarrow \qquad \rho V_\rho=C(z)e^{2\lambda}
\end{equation}
with $\lambda$ depending on $\rho$ alone on the left forces $\lambda=F(\rho)+G(z)$, and substitution into the field equations we get $G=\text{const}$, or equivalently
\begin{equation}
V_z=0\ \Longrightarrow\ \mu_z=\lambda_z=0,
\end{equation}
collapsing the background onto the exact $z$-independent solution. Thus, we result in the potential which is only function of $\rho$ and it is satisfied by the simultaneous annihilation of $\kappa-\bar\kappa$. The radial character of the potential transforms the NP equations as follows
\begin{equation}\tag{a)(n} \label{eq:anEV}
\lambda_{\rho\rho}+\frac{\lambda_\rho}{\rho} = e^{-2\lambda}V_\rho^2
\end{equation}
\begin{equation}\tag{g)+(p}\label{g+p}
\mu_\rho = \rho\lambda^2_\rho -\rho e^{-2\lambda}V_\rho^2, 
\end{equation}
\begin{equation}\tag{g)-(p}
        \mu_z = 0,
\end{equation}
writing $\lambda=\tfrac12\ln(y/C)$ with
$y\equiv\rho V_\rho$, equation \eqref{eq:anEV} absorbs the non-linear terms, via $t=\ln\rho$, $u=\ln y$, translating the equation to a solvable Liouville equation 
\begin{equation}\label{eq:Liouville}  u_{tt}=2Ce^u.\end{equation}
Subsequently, setting $y=\varphi(t)^{-2}$ translates this to \footnote{Where the dot derivative is defined as follows $\dot \ \equiv \partial_t = \rho \partial_\rho$.} 
\begin{equation}
    \dot{\varphi}^2 -\varphi \ddot\varphi = C,
\end{equation}
where this quantity is constant for every function $\phi(t)$ which satisfies the following
\begin{equation}\ddot\varphi-k\varphi=0,\end{equation}
giving three branches. 
\begin{align} 
k>0:&\quad \varphi=A\cosh\big(\sqrt k\,\ln\rho+\delta\big), \qquad A^2=-\frac{C}{k}\ \ (C<0), \\
k=0:&\quad \varphi=A\ln\rho+B, \qquad\qquad\qquad\ \, A^2=C\ \ (C\geq0), \\
k<0:&\quad \varphi=A\cos\big(\sqrt{|k|}\,\ln\rho+\delta\big), \qquad A^2=\frac{C}{|k|}\ \ (C>0),
\end{align}
with
\begin{equation}
\lambda(\rho) = -\ln\varphi(\ln\rho) - \tfrac12\ln C, \qquad V_\rho = \frac{1}{\rho\,\varphi(\ln\rho)^2}.
\end{equation}
The amplitude $A$ is not an independent integration constant but it is fixed by $k$ and $C$ through the relation $\dot\varphi^2-\varphi\ddot\varphi=C$. So the free constants are only $k$ and the phase $\delta$ (or $B$).  Eliminating $A$ using $A^2=-C/k$ and expressing $\lambda(\rho)$ in terms of $\rho$ we take the following relations  

\small
\begin{align}
k>0:&\quad \lambda(\rho) = \ln{\Big( \frac{1}{\sqrt{C}\sqrt{-\frac{C}{k}} \cosh(\sqrt{k}\ln\rho +\phi_0 )}\Big)} \quad\Rightarrow\quad e^{2\lambda} = -\frac{k}{C^2\cosh^2\!\big(\sqrt{k}\,\ln\rho+\phi_0\big)},\label{k>0} \\
k=0:&\quad \lambda(\rho)  = \ln{\Big( \frac{1}{ C \ln\rho+\sqrt{C}B}\Big)} \quad~~~~~~~~~~~~~~~~~~~\Rightarrow\quad e^{2\lambda} = \frac{1}{C\big(\sqrt{C}\ln\rho+B\big)^2}, \label{k=0} \\
k<0:&\quad \lambda(\rho) =  \ln{\Big( \frac{1}{ \frac{C}{\sqrt{|k|}} \cos\big( \sqrt{|k|}\ln{\rho}+\phi_0\big)}\Big)} \quad~~~~~\Rightarrow\quad e^{2\lambda} = \frac{|k|}{C^2\cos^2\!\big(\sqrt{|k|}\,\ln\rho+\phi_0\big)\label{k<0}}.
\end{align}
\normalsize
Using $V(\rho)=-\dot\varphi/(C\varphi)+V_0$, the electrostatic potential is obtained explicitly for every branch by differentiating the corresponding $\varphi(t)$:
\begin{align}
k>0:&\quad V(\rho) = V_0 - \frac{\sqrt{k}}{C}\,\tanh\!\big(\sqrt{k}\,\ln\rho+\phi_0\big), \\
k=0:&\quad V(\rho) = V_0 - \frac{1}{C\ln\rho+\sqrt{C}\,B}, \\
k<0:&\quad V(\rho) = V_0 + \frac{\sqrt{|k|}}{C}\,\tan\!\big(\sqrt{|k|}\,\ln\rho+\phi_0\big).
\end{align}
These are the three possible branches regarding the field equations just solved, however, Maxwell equation along with the definition of $y\equiv\varphi^{-2}$ makes clear that $C$ has to be positive for physically admitted potentials $\lambda(\rho)$. Seemingly, the potential in case $k>0$ is not consistent with a positive sign of $C$, thus only the two cases $k=0$ and $k<0$ are valid.

In case $k=0$, the potential appears to exhibit extreme behavior near $\rho\to0^+$, while at the same time $\sqrt{C}\ln\rho+B$ must be positive, therefore
\begin{equation}\label{k=0new}
k=0: \quad \lambda(\rho) = \ln\!\left(\frac{1}{|C\ln\rho+\sqrt{C}B|}\right),
\end{equation}
where the absolute value is required since the denominator changes sign at  $\rho_*=e^{-\frac{B}{\sqrt{C}}}$.
To assess the resulting curvature structure, we use the two independent Weyl scalars for this branch where $\mu_\rho=0$ throughout
\begin{equation}\tag{b),(j}
   \Psi_0=\frac{e^{\lambda-\mu}}{2}(2\lambda_\rho^2+\lambda_{\rho\rho}), \qquad
\Psi_2 = -\frac{e^{2(\lambda-\mu)}}{2}\Big[\lambda_\rho^2-\frac{\lambda_\rho}{\rho}\Big].
\end{equation}
There are three distinct limits to examine in this branch.
\begin{itemize}

\item{ $\rho\to0^+$.} Here $\ln\rho\to-\infty$, so
$\lambda_\rho\to+\infty$, since $e^{2(\lambda-\mu)}$ vanishes only
logarithmically, far too slowly to compensate the divergence due to the power-law, both scalars diverge, $\Psi_0\to-\infty$ and $\Psi_2\to+\infty$ exhibiting \textbf{a naked singularity} at the coordinate axis.

\item{ $\rho\to+\infty$.} Here $\lambda_\rho\to0$ and
$\lambda_{\rho\rho}\to0$, while $e^{\lambda-\mu}\to0$, both scalars vanish, $\Psi_0\to0$ and $\Psi_2\to0$, confirming this is an asymptotically conformally flat region.

\item{$\rho\to\rho_*$.} The bracket $C\ln\rho+\sqrt{C}B$ vanishes at $\rho_*=e^{-B/\sqrt{C}}$, a finite point within the domain $(0,\infty)$ regardless of the value of $B$. At this radius $\lambda\to+\infty$ and $e^{2\lambda}\to\infty$, giving another curvature singularity, occurring at a point unconnected to any coordinate degeneracy. We use the term ``singular radius" specifically for $\rho_*$ to emphasize this feature.

\end{itemize}

Based on the previous points, this branch is singular at both boundaries of the physical range as well as at one interior point, and regular only in the single asymptotic limit $\rho\to\infty$.

The $k<0$ branch behaves quite differently at both ends. Since $\ln\rho$ ranges over all of
$(-\infty,\infty)$ as $\rho$ ranges over $(0,\infty)$, the argument $\sqrt{|k|}\ln\rho+\delta$ of the cosine is unbounded in both directions, so it passes through infinitely many zeros as $\rho\to0^+$ and as $\rho\to+\infty$. This branch therefore develops infinite curvature singularities rather than a single isolated singular radius which is a qualitatively richer, and more pathological, singularity structure than the $k=0$ branch.

Next, in pursuit of an explicit relation of $\mu(\rho)$ we eliminate $V_\rho^2/e^{2\lambda}$ from \eqref{g+p} using \eqref{eq:anEV} as follows
\begin{equation}
    \mu_\rho=\rho\lambda_\rho^2-\rho\lambda_{\rho\rho}-\lambda_\rho.
\end{equation}
One can easily shows that $\lambda_\rho=-\frac{1}{\rho}\frac{\dot\varphi}{\varphi}$ and using $\ddot\varphi=k\varphi$ we obtain
\begin{equation}
\mu(\rho) = k\ln\rho+\mu_0.
\end{equation}
This result holds for any potential obtained above regardless of the sign of $k$, but since only the $k=0$ and $k<0$ branches of $\lambda(\rho)$ are physically admissible, only those two values of $k$ enter $\mu(\rho)$. For $k=0$ branch $\mu(\rho)=\text{const.}$ and decreases
logarithmically, $\mu=k\ln\rho+\mu_0$ with $k<0$.

The corresponding Killing tensor, for every one of these branches, is fixed by
\begin{equation}
\lambda_0=0, \qquad \lambda_1=\text{const}, \qquad \lambda_3=C_3\,e^{2(\mu-\lambda)}, \qquad
\lambda_2=-\lambda_1-\lambda_3,
\end{equation}
and therefore takes the following form denoting its reducibility to Killing tensors of equal (metric) or lower rank (Killing vectors)
\begin{equation}
K^{\mu\nu} = \lambda_1\,g^{\mu\nu} - 2\,\big(\lambda_3\,e^{-2(\mu-\lambda)}\big)\,\xi_z^\mu\xi_z^\nu
= \lambda_1\,g^{\mu\nu} - 2C_3\,\xi_z^\mu\xi_z^\nu.
\end{equation}

This decomposition holds for \emph{any} admissible $V(\rho)$ and, by
extension, any $\lambda(\rho)$ satisfying the field equations above, vacuum or electrovacuum, and regardless of which of the three branches is chosen. Notably, $\lambda_3(\rho)$ itself is not constant but an explicit function of $\rho$ through whichever background lives because $\lambda_3\propto e^{2(\mu-\lambda)}$ is exactly the solution of $\delta\lambda_3=-4\bar\alpha\lambda_3$, not a generic feature of an arbitrary $\lambda_3(\rho)$. The dependence of $\rho$ carried by $\lambda_3(\rho)$ and by the normalization of $\xi_z^\mu\xi_z^\nu$ is canceled. The Killing tensor in this branch is therefore reducible in vacuum and in electrovacuum alike, and encodes only the two linear invariants $E$ and $L_z=\xi_z\cdot p$ already guaranteed by the manifest isometries and not a fourth hidden constant of motion.

\subsection{Case $\lambda_0= \lambda_1+\lambda_2-\lambda_3=0$ and $\Phi_0-\bar\Phi_0\neq0$ }
We initiate our analysis by the annihilation of the imaginary part of $\kappa$ along with relation $\eqref{kappa}$ yielding the following relation
\begin{equation}
     \lambda_\rho = \frac{1}{2\rho} \quad \Rightarrow \quad e^{2\lambda} = \rho e^{\Sigma(z)}.
\end{equation}
The latter along with the annihilation of the real part of $\Phi_0$ do not prohibit the dependence of $z$ coordinate and reshapes the field equations and Maxwell equations as follows
\begin{equation}\label{eq:NP512i}
\lambda_{zz} = e^{-2\lambda}V_z^2
\end{equation}
\begin{equation}\label{eq:NP512ii}
\mu_\rho = \rho\Big(\frac{1}{4\rho^2} - \lambda^2_z \Big) -\rho e^{-2\lambda}V_z^2, 
\end{equation}
\begin{equation}\label{eq:NP512iii}
        \mu_z = \lambda_z,
\end{equation}
\begin{equation}\label{eq:NP512iv}
V_{zz} = 2\lambda_z V_z
.\end{equation}
Equivalently, \eqref{eq:NP512i} becomes
\begin{equation}
\lambda_{zz} = \frac{V_z(z)^2}{\rho\,e^{\Sigma(z)}}.
\end{equation}
The left-hand side, $\lambda_{zz}=\tfrac12\Sigma''(z)$, is manifestly independent of $\rho$, while the right-hand side carries an explicit, unremovable $1/\rho$ dependence unless $V_z\equiv0$. Consistency for all $\rho$ therefore yields 
\begin{equation}
\Sigma''(z)=0 \qquad\text{and}\qquad V_z(z)=0
\end{equation}
independently -- the second of which directly contradicts the non-triviality requirement $V_z\neq0$ demanded by the branch title. Hence no genuinely charged solution exists on this branch, thus $V=\text{const}$ and electrovacuum reduces to vacuum, and simultaneously fixes $\Sigma(z)=c_1z+c_2$ to be affine, recovering exactly the vacuum background $\lambda=\tfrac12\ln\rho+cz$ already identified. This is a considerably more direct route to the same conclusion than solving the general coupled system below, but we record that system for completeness, since it is what one would need without first invoking the alignment condition.

The Killing equations set up an additional constraint to this system 
\begin{equation}
    \delta(\lambda_1+\lambda_2-\lambda_3)=0 \Rightarrow (\pi+\alpha+\bar\alpha)\lambda_3=0.
\end{equation}
Considering that the annihilation of $\lambda_3$ gives the trivial case we choose $\pi+\alpha+\bar\alpha=0$ obtaining an extra constrain 
\begin{equation}
    (\mu-\lambda)_\rho = \frac{1}{2\rho} \quad\Rightarrow e^{2(\mu-\lambda)} = \rho e^{h(z)} .
\end{equation}
With these constraints we substitute into the field equations
\eqref{eq:NP512i}-\eqref{eq:NP512iii} and after simplification we get
\begin{align}
V_z^2 &= \frac{\rho}{2}\Sigma'' e^{\Sigma}, \label{eq:A}\\
V_z^2 &= e^{\Sigma}\Big[\frac{3}{4\rho}+\frac{\rho}{4}\Sigma'^2\Big], \label{eq:B}\\
h'(z) &= 0. \label{eq:C}
\end{align}
Equations \eqref{eq:A}--\eqref{eq:B} determine $V_\rho^2$ and $V_z^2$ individually; for a real
function $V(\rho,z)$ to exist, these must be consistent with \eqref{eq:C} through the identity
$(V_\rho V_z)^2=V_\rho^2V_z^2$. Imposing this yields
\begin{equation}
 \rho\Sigma(z)'' \;-\; \frac{\rho}{2}\Sigma(z)'^2 -\frac{3}{2\rho}=0.
\end{equation}
The last condition is inevitable, independent of any choice of $\Sigma(z)$. Even disregarding it, the second condition collapses to $V_z^2=0$, yielding\ $V=\mathrm{const}$. Thus, we conclude that the separable ansatz admits a trivial electrostatic potential and the only consistent solution is the trivial uncharged case $\Sigma,h=\mathrm{const}$, $V=\mathrm{const}$.

\subsubsection{Case $\lambda_3= \lambda_0+\lambda_1+\lambda_2=0$ and $\Phi_0 +\bar\Phi_0\neq0$}
Both remaining entries of the $q=+1$ table sit inside the $\Phi_0+\bar\Phi_0\neq0$ branch of the constraint \eqref{eq:qKM1}. Within $\Phi_0+\bar\Phi_0\neq0$, the two remaining cases are distinguished simply by whether $\lambda_3$ vanishes or not.

As before, annihilation of the imaginary part of $\Phi_0$ forces $V_z=0$ and hence $\lambda_z=0$, so $\xi_z=\partial_z$ is manifest and the geodesic motion is already completely separated by $\xi_t,\xi_\varphi,\xi_z$. Using $\lambda_3=0$ to eliminate the $(m,\bar m)$-sector in favor of $l,n,g$, the Killing tensor reduces to
\begin{equation}
K_{\mu\nu} = 2\lambda_0\,\rho^2e^{-2\lambda}\,(d\varphi)_\mu(d\varphi)_\nu - \lambda_2\,g_{\mu\nu},
\end{equation}
and raising indices, the $\rho,\lambda$-dependence cancels identically, leaving
\begin{equation}
K^{\mu\nu} = -\lambda_2\,g^{\mu\nu} + 2C_0\,\xi_\varphi^\mu\xi_\varphi^\nu.
\end{equation}
This reducible Killing tensor contributing nothing beyond $m^2$ and $L$, consistent with the separability already guaranteed by $\xi_t,\xi_\varphi,\xi_z$.

\subsubsection{Case $\lambda_0+\lambda_1+\lambda_2+\lambda_3=0,\ \lambda_3\neq0$ and $\Phi_0 +\bar\Phi_0\neq0$}

Here $\lambda_0\neq0$ survives, and the same $\lambda_z=0$ collapse applies (Section~\ref{Section6.1.1}). Integration of the Killing equations gives
\begin{equation}
\lambda_3=C_0\,e^{2(\mu-\lambda)}, \qquad \lambda_2=B_0-\lambda_3, \qquad \lambda_0=A_0\,\rho^2e^{-2\lambda}, \qquad \lambda_1=-\lambda_0-B_0,
\end{equation}
so that $\lambda_2+\lambda_3=B_0$ is constant, and the Killing tensor is
\begin{equation}
K^{\mu\nu} = -B_0\,g^{\mu\nu} + \beta\,\xi_t^\mu\xi_t^\nu + \gamma\,\xi_\varphi^\mu\xi_\varphi^\nu - 2C_0\,\xi_z^\mu\xi_z^\nu,
\end{equation}
again fully reducible with respect to $\{\partial_t,\partial_\varphi,\partial_z\}$ carrying no independent dynamical content.
Both cases therefore confirm the same conclusion even if $\lambda_3$ vanishes or not. The resulting Killing tensor is fully absorbed by the three Killing vector algebra that $\lambda_z=0$ guarantees.

\section{Summary \& Discussion}\label{Chapter 8}
The initial scope of this work was to confront Weyl's class as a whole, both in vacuum and electrovacuum, foreseeing that there would be a combination of coefficients of the generic asymptotically flat multipole expansion potential which, along with an irreducible Killing tensor, would be able to determine integrable spacetimes. In addition, it would be an analytical approach to clarify the non-integrability of the already known subclasses of Weyl's class. The results obtained resolve the question posed in a more structured way than a simple negative answer. To our knowledge, until now there has been no systematic classification of all canonical rank-2 Killing tensors admitted by the entire Weyl class. The result results in a criterion that decides, for any of these subclasses of Weyl's class, whether a genuine hidden symmetry is present. Also, we note that we adopt the notion of reducibility presented in \cite{dolan1989significance} where a Killing tensor is reducible if it can be written as a combination of the metric and symmetrized products of Killing vectors with constant coefficients and is genuinely admitted by the background.

\begin{center}
\begin{table}[H]
\caption{Killing tensors of Weyl's class}
\label{tab:KillingTensorSummary}
\footnotesize
\centering
\begin{tabular}{c | c | c}
Case & Constraint & Killing tensor $K^{\mu\nu}$ \\ \hline
$q=0,\ \kappa+\bar\kappa=0\neq\kappa-\bar\kappa$ & $\lambda_0=0$ & $\lambda_1\,g^{\mu\nu}$ \\
$q=0,\ \kappa-\bar\kappa=0,\ \text{Levi-Civita}$ & $\kappa-\bar\kappa=0$ & $B_0\,g^{\mu\nu}-2C_3\,\xi_z^\mu\xi_z^\nu$ \\
$q=-1,\ \kappa+\bar\kappa=0\neq\kappa-\bar\kappa$ & $\delta\lambda_0=2\bar\pi\lambda_0$ & $A\,\xi_{(t}^\mu\xi_{\phi)}^\nu+\lambda_1\,g^{\mu\nu}$ \\
$q=-1,\ \kappa-\bar\kappa=0,\ \text{Levi-Civita}$ & $\kappa-\bar\kappa=0$ & $A\big(\xi_t^\mu\xi_\phi^\nu+\xi_\phi^\mu\xi_t^\nu\big)+\lambda_1\,g^{\mu\nu}-2C_3\,\xi_z^\mu\xi_z^\nu$ \\
$q=+1,\ \kappa\neq0,\ \lambda_3=0$ & $\lambda_0+\lambda_1+\lambda_2=0$ & $N_1\xi_t^\mu\xi_t^\nu+N_2\xi_\varphi^\mu\xi_\varphi^\nu-\lambda_2\,g^{\mu\nu}$ \\
$q=+1,\ \kappa\neq0,\ \lambda_3\neq0$ & $C_3\,e^{2(\mu-\lambda)}=F(\rho)+G(z)$ & No generic potential \\ \hline
$q=0,\pm1,\ \lambda_0=\lambda_1+\lambda_2+\lambda_3=0$ & $\Phi_0+\bar\Phi_0\neq0$ & $\lambda_1\,g^{\mu\nu}-2C_3\,\xi_z^\mu\xi_z^\nu$ (Bronnikov) \\
$q=0,\pm1,\ \lambda_0=\lambda_1+\lambda_2-\lambda_3=0$ & $\Phi_0-\bar\Phi_0\neq0$ &  Vacuum Levi-Civita \\ 
$q=+1,\ \lambda_3=0$ & $\lambda_0+\lambda_1+\lambda_2=0$ & $-\lambda_2\,g^{\mu\nu}+2C_0\,\xi_\varphi^\mu\xi_\varphi^\nu$ \\
$q=+1,\ \lambda_3\neq0$ & $\lambda_0+\lambda_1+\lambda_2+\lambda_3=0$ & $-B_0\,g^{\mu\nu}+\beta\,\xi_t^\mu\xi_t^\nu+\gamma\,\xi_\varphi^\mu\xi_\varphi^\nu-2C_0\,\xi_z^\mu\xi_z^\nu$ \\
\end{tabular}
\end{table}
\end{center}
\normalsize
The table above summarizes all cases examined in this work, organized into the vacuum and electrovacuum limit.
\subsection*{Vacuum}
In the vacuum limit the Levi-Civita solutions were not presented in detail because we set the target to cases which had the potential to admit irreducible Killing tensors. Indeed, the existence of $\partial_z$ as a third Killing vector combined with the general solution of the Killing equations produces unavoidable reducibility of the Killing tensor within Weyl's class. Solving the Killing equations for the canonical tetrads, we obtain a Killing tensor of the form $K^{\mu\nu}=\alpha g^{\mu\nu}+\beta\xi_t^\mu\xi_t^\nu+\gamma\xi_\varphi^\mu\xi_\varphi^\nu+\delta\xi_z^\mu\xi_z^\nu$ for all $\rho$. Consequently $Q=K^{\mu\nu}p_\mu p_\nu$ reduces to a linear combination of quantities already conserved by the three Killing vectors and the mass shell, carrying no independent information. In short, wherever a Weyl-class background is exactly $z$-independent, the Killing tensor is reducible, and no genuine hidden symmetry can arise in this subclass.
The Killing tensor with $q=+1$ (Case $\lambda_3=0$) was the most prominent candidate for irreducibility, since equation \eqref{eq:caseAsol} holds on \emph{any} static, axisymmetric vacuum Weyl background, for every harmonic $\lambda$, including any asymptotically flat multipole solution, Schwarzschild, Zipoy-Voorhees, or Levi-Civita, with $N_1,N_2,\lambda_2$ constants. No specific form of $\lambda$ was assumed at any stage, only the general expressions for $\pi,\kappa$ were used, and the companion potential $\mu$ never appears explicitly. However, a direct computation of its contravariant components reveals that the corresponding Killing tensor is reducible. Thus, this universal branch of the $q=+1$ canonical form does not encode a genuine hidden symmetry on any background. This provides a sharp and general result: an entire family of apparent Killing-tensor candidates can be excluded from consideration as sources of hidden symmetries.

The other branch of $q=+1$ with $\lambda_3\neq0$ is exactly what allows $K^{\rho\rho}\neq K^{zz}$, and no reducible combination built from $g^{\mu\nu}$ and $\xi_t,\xi_\phi$ alone can produce that feature. So whenever the case $\lambda_3\neq0$ exists, it is automatically independent of $E,L$. This case, though, is governed by a strict separability condition on the conformal factor of the meridional plane namely, $e^{2(\mu-\lambda)}$. This condition is equivalent to the classical Liouville criterion for a 2-dimensional metric, which places the whole discussion on track in which a Killing tensor of this type furnishes genuine new integrability precisely when the source can be represented by a single Weyl rod occupying one confocal coordinate patch, whether that rod is finite (Schwarzschild) or infinite (Levi-Civita). Every configuration built from more than one rod, or from a non-uniform density along a single rod such as multipole sources, fails this criterion and correspondingly lacks an additional integral of motion, regardless of the Petrov type of the solution in question.

Built in coordinates adapted to Schwarzschild's own confocal structure, it reproduces Carter's constant $Q=p_\theta^2+L^2/\sin^2\theta$ exactly, and cannot be written as any constant-coefficient combination of the metric tensor and Killing vectors. This is structurally the same as the mechanism established for the Levi-Civita family where $\partial_z$ is an extra Killing vector invisible to the generic $q=0,\pm1$ classification, present because Levi-Civita happens to be more symmetric than a generic Weyl background. Here, $\xi_x,\xi_y$ are extra Killing vectors invisible to the axisymmetric formalism, present because Schwarzschild happens to be spherically symmetric. In both cases, apparent irreducibility relative to the algebra $\{\xi_t,\xi_\varphi\}$ is an artifact of testing against a smaller symmetry group than the background actually possesses. Carter's constant is considered new information and it is precisely what is required to separate the Hamilton-Jacobi equation when the generic axisymmetric formalism is considered. In other words, its integrability is real and complete, but it is explained entirely by non-hidden symmetry.
Along these lines, we present one clean statement: within Weyl's class, there is no such a background able to admit a Killing tensor that is irreducible. Schwarzschild remains the only background examined whose Killing tensor achieves \emph{dynamical} novelty relative to the generic axisymmetric subalgebra $\{\xi_t,\xi_\varphi\}$ which is solely related to isometries than hidden symmetries. The $\lambda_3=0$ branch is reducible on every background. Case $\lambda_3\neq0$ branch of $q=+1$, is also reducible discribed by the separation a criterion which is not satisfied by any background as a whole.

\subsection*{Electrovacuum}
Electrovacuum limit considered as a wider playground for metric potentials hoping for the existence of spactimes with hidden symmetries. As a matter of fact there is not such a spacetime however, within the cases examined an interesting solution emerged containing naked singularities at $\rho=\rho_*$ and at the axis of symmetry.This solution was first obtaine by Bronnikov in  \cite{bronnikov1979inverted} (see also \cite{bronnikov2020cylindrical}) while the author investigates static, axially symmetric electrovacuum spacetimes by adopting a more general metric ansatz for the $g_{\phi\phi}$ component. 

Remarkably, this broader ansatz leads to precisely the same spacetimes that we obtain in Sec.~\ref{Section6.1.1}. All of our cases can be identified with solutions within the classification of \cite{bronnikov1979inverted} however, the branch corresponding to $k>0$ is not physically admissible within our setup. This difference stems from the fact that Weyl's class constitutes a restricted subclass of the more general metric considered by Bronnikov, allowing for an additional branch that is excluded by the Weyl form of the metric. More specifically, in Bronnikov's parametrization, the $g_{\phi\phi}$ component is given by the square of a function of the non-ignorable coordinate, rather than by the Weyl form $g_{\phi\phi}=-\rho^2 e^{-2\lambda}$. The latter imposes an additional restriction on the metric and consequently on the corresponding field equations. This observation is further supported, from a different perspective\textcolor{blue}{,} by the recent analysis of \cite{nasereldin2026static}, where it is emphasized that the Weyl form of the metric imposes an additional constraint\textcolor{blue}{:} taking $g_{\phi\phi}=-\rho^2 e^{-2\lambda}$ is incompatible with a non-vanishing cosmological constant. This remark also emerges in our case\textcolor{blue}{, in equations} \eqref{lambda}.

\bibliographystyle{unsrt}
\bibliography{sn-bibliography}

\end{document}